\newtheorem{theorem}{Theorem}[section] 
\newtheorem{lemma}[theorem]{Lemma}     
\newtheorem{corollary}[theorem]{Corollary}
\newcommand{\Q}{\mathcal{Q}}
\newcommand{\A}{\mathcal{A}}
\newcommand{\B}{\mathcal{B}}
\newcommand{\E}{\mathbb{E}}
\newcommand{\I}{\mathcal{I}}
\title{\textbf{Learning-augmented online caching: new upper bounds}} 
\author[1,2]{Daniel Skachkov}
\author[3]{Denis Ponomaryov}
\author[1,2]{Yuriy Dorn}
\author[3]{Alexander Demin}
\affil[1]{Institute for Artificial Intelligence, Lomonosov Moscow State University, Moscow, Russia}
\affil[2]{Moscow Institute of Physics and Technology, Moscow, Russia}
\affil[3]{Ershov Institute of Informatics Systems, Novosibirsk, Russia}
\date{} 
\begin{document}

\maketitle



\section{Introduction}

The topic of storage management has motivated quite a number of novel methods in online optimization. 
According to the existing classification (\cite{supersurvey}), storage management algorithms are subdivided into several groups including replacement, admission, prefetching, and allocation policies. In this paper, we focus on replacement policies (also called \textit{eviction strategies}).

A \textit{cache hit} occurs when the requested element (e.g., page) is already present in the cache. Conversely, a \textit{cache miss} happens when the requested page is not in the cache. It then needs to be loaded into the cache. If the cache has reached its capacity, an eviction strategy must decide which page to evict in order to create space for the new one. The objective of this strategy is to minimize the total number of cache misses by managing evictions efficiently.

It is known that in offline case, when the whole sequence of requests is known in advance, Belady algorithm (\cite{Belady}) provides an optimal solution. This algorithm evicts page with the largest distance to its next request. The efficiency of other replacement policies is typically measure against that of Belady in terms of competitive ratio.

\textit{Competitive ratio} of an eviction strategy $\mathcal{S}$ is defined as the ratio of the number of cache misses achieved by $\mathcal{S}$ to the best possible number (i.e., achieved by Belady) in the worst case, which means that we consider the maximum of this value over all possible inputs. In general, when the strategy is randomized, we consider the expected number of cache misses, where the expectation is taken over the random bits of the algorithm.

Different replacement strategies have been proposed in the literature including the classical LFU and LRU algorithms. LFU (Least Frequently Used) evicts a page which is requested less frequently than other pages, while LRU (Least Recently Used) evicts the page that was not requested for the longest time. Also mixtures of well-known replacement strategies have been proposed in the literature with the goal to more flexibly support varying data access patterns (we refer the reader to survey \cite{ReplacementPoliciesBook}).

The theoretical lower bound on the competitive ratio of eviction strategies is known to be \( \Omega(k) \) for deterministic strategies and \( \Omega(\log k) \) for randomized (stochastic) strategies (\cite{RandomBook}). These bounds are tight. LRU is known to be optimal among deterministic strategies, while one of the best known theoretically optimal randomized replacement strategies is the Marker algorithm (\cite{MarkerAlgorithm}).   

The Marker algorithm consists of a series of so-called epochs. At the very beginning of each epoch, all pages in the cache are considered as unmarked. When a cache hit occurs, the corresponding page is marked. In the case of a cache miss, an unmarked page is selected uniformly at random and it is evicted, after which the new page is marked immediately. When there are no more unmarked pages left, the current stage is over and the algorithm proceeds to a new stage.

Lykouris and Vassilvitskii  (\cite{PredictiveMarker}) suggested considering a variant of the cache replacement problem, the so-called \textit{learning-augmented} online caching. In this setting, each request arrives with a prediction of its next occurrence time, which is the time when the same page will be requested in the future. Specifically, they considered a Predictive Marker algorithm. It aims at balancing between the pure Marker algorithm and its modified version, which chooses not an arbitrary unmarked page, but the one with the highest predicted request time. This balancing allows not only to achieve a constant competitive ratio for sufficiently accurate predictions, but also guarantees \( O(\log k) \) competitive ratio in the worst case. 

In \cite{MTS} Metric Task Systems (MTS) was introduced  as a class of online convex optimization problems. An MTS instance is given by a metric space \( S \), with elements being states, and a sequence of tasks to be completed. The cost of executing a task differs from task to task and depends on the current state. The cost of switching between states is defined by the distance matrix of space \( S \). The goal is to switch between states in such a way that minimizes the total cost. Notice that if we take \( S \) as consisting of all possible cache states, with distances equal to zero, and assume that page requests are tasks of cost \( 1 \) if there is a cache miss and \( 0 \) otherwise, we obtain the cache replacement problem. 

In \cite{MTS_ML} authors consider MTS with predictions, which are elements of a metric space \( S \) and ideologically predict the behavior of some offline algorithm (not necessarily optimal). Given any \( \alpha \)-competitive online algorithm, they showed how to improve it by using  predictions, so that it is still $\alpha$-competitive in the worst case, but tends to the results of an offline algorithm with the decrease of prediction error. 

Alexander Wei (\cite{PredictiveMarkerImprovement2}) obtained upper bounds for the competitive ratio of the \textsc{BlindOracle} algorithm, which operates like the Belady algorithm by using predictions (i.e., evicting the page with the largest predicted next request time). Also, there were presented upper bounds for combinations of \textsc{BlindOracle} with LRU and Marker. Rohatgi Dhruv established a lower bound on the competitive ratio of any randomized algorithm (\cite{PredictiveMarkerImprovement1}).

The paper is organized as follows. 
In Section~\ref{Sect:Background}, we review known upper and lower bounds on the competitive ratio for \textsc{BlindOracle}, randomized algorithms, and their combinations. 
Section~\ref{Sect:Contribution} summarizes the main contributions of this work. 
Section~\ref{Sect:Statement} introduces the necessary notation and formalizes the problem statement. 
The description of our randomized algorithm is presented in Section~\ref{Sec:Alg}. 
Section~\ref{Sect:EvictionGraph} discusses eviction graphs and their existence—an auxiliary tool for proving competitive ratio bounds. 
This section also derives new upper bounds for \textsc{BlindOracle} using this technique. 
In Section~\ref{Sect:UpperBound}, we prove the upper bound for our algorithm by leveraging the framework from Section~\ref{Sect:EvictionGraph}. 
Finally, Section~\ref{Sect:Conclusion} concludes the paper.

\section{Related Work}\label{Sect:Background}
It is known that online algorithms for cache replacement can be combined in a way that ensures that the resulting algorithm performs nearly as well as the best of the combined ones. For a strategy $\mathcal{S}$, let $\mathrm{OBJ}_\mathcal{S}(\sigma)$ denote the objective function value (i.e., the number of cache misses) produced by $\mathcal{S}$ for some sequence of requests $\sigma$. We also use the notation $\mathrm{OPT}(\sigma)$ as a shorter way to represent $\mathrm{OBJ}_{\mathcal{A}}(\sigma)$, where $\mathcal{A}$ is an optimal offline algorithm (Belady). For simplicity, we will usually omit $\sigma$ and use notations $\mathrm{OBJ}_\mathcal{S}$ and $\mathrm{OPT}$ instead, unless we want to specify the exact sequence we are working with. In \cite{MarkerAlgorithm} the following result was obtained: 

\begin{theorem}[\cite{MarkerAlgorithm}, special case] \label{theorem_combining_deterministic}
    Let $\mathcal{S}_1$ and $\mathcal{S}_2$ be eviction strategies. Then there exists an eviction strategy $\mathcal{S}$ such that
    $$
    \mathrm{OBJ}_\mathcal{S}(\sigma) \leq 2 \min\{ \mathrm{OBJ}_{\mathcal{S}_1} (\sigma), \mathrm{OBJ}_{\mathcal{S}_2} (\sigma)\} + O(1)
    $$
    for any sequence $\sigma$.
     Moreover, if both $\mathcal{S}_1$ and $\mathcal{S}_2$ are deterministic, then so is $\mathcal{S}$.
\end{theorem}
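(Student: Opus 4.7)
The plan is to build $\mathcal{S}$ as a meta-algorithm that runs virtual copies of $\mathcal{S}_1$ and $\mathcal{S}_2$ on the request sequence while physically mimicking one of them at a time. Divide the time horizon into phases; in each phase $i$, $\mathcal{S}$ is in some mode $a_i \in \{1, 2\}$ and keeps its physical cache identical to the virtual cache $C_{a_i}(t)$ of the simulated $\mathcal{S}_{a_i}$, by mirroring its eviction decisions. A phase ends the moment the miss count of the currently followed strategy since the start of the phase exceeds the other strategy's miss count by a chosen threshold $T$ (to be tuned in terms of $k$); at that point, $\mathcal{S}$ switches to the opposite mode and starts a new phase. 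Since any two caches of size $k$ differ in at most $k$ pages, the catch-up after a switch costs at most $k$ additional misses, charged lazily as the missing pages are requested.

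The analysis rests on two ingredients. First, the switching rule guarantees that the per-phase cost $c_{a_i}^{(i)}$ of the followed strategy satisfies $c_{a_i}^{(i)} \le c_{\bar a_i}^{(i)} + T$, whence $c_{a_i}^{(i)} \le \min\{c_1^{(i)}, c_2^{(i)}\} + T$, and the contribution of phase $i$ to $\mathrm{OBJ}_{\mathcal{S}}$ is at most $\min\{c_1^{(i)}, c_2^{(i)}\} + T + k$. Second, each completed phase terminates via the threshold and thus contributes at least $T$ misses to the losing strategy; summing, the total number of phases $P$ satisfies $P \le (\mathrm{OBJ}_{\mathcal{S}_1} + \mathrm{OBJ}_{\mathcal{S}_2})/T + 1$. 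Summing the per-phase bounds, using $\sum_i \min\{c_1^{(i)}, c_2^{(i)}\} \le \min\{\mathrm{OBJ}_{\mathcal{S}_1}, \mathrm{OBJ}_{\mathcal{S}_2}\}$, and tuning $T$ so that the $O(k)$ synchronization overhead per switch is paid for by the miss differential $T$ driving the switch, yields $\mathrm{OBJ}_{\mathcal{S}}(\sigma) \le 2 \min\{\mathrm{OBJ}_{\mathcal{S}_1}(\sigma), \mathrm{OBJ}_{\mathcal{S}_2}(\sigma)\} + O(1)$. The \emph{moreover} clause is immediate: the switching rule is purely deterministic and the construction queries only $\mathcal{S}_1, \mathcal{S}_2$, so if both are deterministic then so is $\mathcal{S}$.

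The main obstacle is nailing down the leading factor of exactly $2$: a direct summation gives a larger constant, and getting down to $2$ requires a careful amortization in which each switch's $O(k)$ synchronization cost is charged against the $T$-gap of misses that triggered it, together with separate treatment of the first and last (possibly incomplete) phases, which together contribute only to the $O(1)$ additive term.
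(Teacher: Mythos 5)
The paper does not reprove this statement: it is quoted from \cite{MarkerAlgorithm}, and the only proof content given is the construction itself --- simulate $\mathcal{S}_1$ and $\mathcal{S}_2$ in parallel and, on each fault, evict a page that is \emph{absent from the cache of the currently most successful strategy}, so that the combined cache lazily chases the current leader and no wholesale resynchronization is ever paid for. Your construction is genuinely different (deterministic threshold switching with full cache resynchronization, essentially the scheme behind Theorem~\ref{theorem_combining_stohastic}), and the part you defer --- ``a careful amortization'' giving the leading factor exactly $2$ --- is precisely where the argument does not close. Concretely: (i) your phase bound $P \leq (\mathrm{OBJ}_{\mathcal{S}_1}+\mathrm{OBJ}_{\mathcal{S}_2})/T + 1$ is too weak, since the resulting overhead $P(T+k)$ scales with $\mathrm{OBJ}_{\mathcal{S}_1}+\mathrm{OBJ}_{\mathcal{S}_2}$, which can exceed $\min\{\mathrm{OBJ}_{\mathcal{S}_1},\mathrm{OBJ}_{\mathcal{S}_2}\}$ by an arbitrary factor; you need the alternation of modes to get $P \leq 2\min\{\mathrm{OBJ}_{\mathcal{S}_1},\mathrm{OBJ}_{\mathcal{S}_2}\}/T + 1$. (ii) Even with that fix, your own summation gives roughly $3\min\{\cdot\} + O\bigl(k\min\{\cdot\}/T\bigr) + O(T+k)$, and a sharper pairing of consecutive phases (charging each phase's cost $c_{\bar a_i}^{(i)}+T+k$ against the $\geq c_{\bar a_i}^{(i)}+T$ misses the pair forces on \emph{both} simulated strategies) still only yields $\bigl(2 + \Theta(k/T)\bigr)\min\{\cdot\} + O(T+k)$: the $\Theta(k)$ resynchronization cost per switch is not covered by the $T$-gap that triggered the switch unless $T$ grows with $k$, which in turn inflates the additive term beyond $O(1)$, and for no fixed choice of $T$ does your chain of inequalities produce $2\min\{\cdot\}+O(1)$.

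To repair this along your lines you would have to argue that the resynchronization misses are not pure overhead --- e.g.\ that each such miss coincides with a miss of the currently non-followed simulated strategy, so it can be charged to $\min\{\mathrm{OBJ}_{\mathcal{S}_1},\mathrm{OBJ}_{\mathcal{S}_2}\}$ rather than paid on top --- and this requires tracking what the combined cache actually contains at each switch (it need not equal the previously followed cache if the previous resynchronization never completed). The cited construction sidesteps all of this: since the combined algorithm never flushes its cache and only ever evicts pages the current leader does not hold, every one of its faults can be matched to a fault of one of the two simulated strategies, and the factor $2$ comes out of that direct charging with no threshold parameter at all. Your ``moreover'' clause (determinism of $\mathcal{S}$) is fine, but as written the quantitative core of the theorem is asserted rather than proved.
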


They specifically showed that the desired strategy $\mathcal{S}$ can be constructed by simulating the execution of $\mathcal{S}_1$ and $\mathcal{S}_2$ independently and evicting any page that is absent in the cache of the most successful strategy.

A more sophisticated technique which involves stochastically switching between strategies was used in \cite{combining} to prove the following theorem:

\begin{theorem}[\cite{combining}, special case] \label{theorem_combining_stohastic}
    Let $\mathcal{S}_1$ and $\mathcal{S}_2$ be eviction strategies. Then for any $\gamma \in (0, \frac{1}{4})$, there exists a randomized strategy $\mathcal{S}$ such that for any sequence $\sigma$ holds:
    $$
    \mathrm{OBJ}_\mathcal{S}(\sigma) \leq (1 + \gamma) \min\{\mathrm{OBJ}_{\mathcal{S}_1}(\sigma), \mathrm{OBJ}_{\mathcal{S}_2}(\sigma)\} + O(\gamma^{-1}k).
    $$

\end{theorem}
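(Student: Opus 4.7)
The strategy $\mathcal{S}$ will simulate both $\mathcal{S}_1$ and $\mathcal{S}_2$ in parallel on $\sigma$ while, at each time $t$, physically mirroring the cache of one of them, selected by a random index $i_t \in \{1,2\}$. A miss of $\mathcal{S}$ at time $t$ then has two sources: a \emph{following cost}, incurred whenever $\mathcal{S}_{i_t}$ itself misses, and a \emph{switching cost}, incurred whenever $i_t \neq i_{t-1}$, since synchronizing the physical cache with the other simulated strategy can require loading up to $k$ pages. The goal is to design the random process $(i_t)$ so that the expected following cost is at most $(1+\gamma)\min\{C_1,C_2\} + O(\gamma^{-1})$ and the expected number of switches is $O(\gamma^{-1})$, where $C_i = \mathrm{OBJ}_{\mathcal{S}_i}(\sigma)$.

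First I would model $(i_t)$ as a two-expert online learning problem whose per-round loss for expert $i$ is the miss indicator of $\mathcal{S}_i$, and drive it with a multiplicative-weights distribution $p_t(i) \propto (1+\gamma)^{-C_i(t-1)}$. A textbook potential-function argument using $\Phi_t = \ln(w_1(t) + w_2(t))$ yields $\mathbb{E}[\text{following cost}] \leq (1+\gamma)\min\{C_1,C_2\} + O(\gamma^{-1})$. The subtlety is that sampling $i_t$ independently from $p_t$ each round could flip the index $\Theta(T)$ times and hence incur a prohibitive $\Theta(Tk)$ switching cost. To avoid this, I would sample $(i_t)$ via a lazy maximal coupling of the successive distributions, so that $\Pr[i_t \neq i_{t-1}] = \|p_t - p_{t-1}\|_{\mathrm{TV}}$, i.e.\ the index changes no more often than is strictly necessary to keep its marginal equal to $p_t$.

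With this coupling, the expected total number of switches is at most $\sum_t \|p_t - p_{t-1}\|_{\mathrm{TV}}$. Since each single miss changes the log-ratio $\ln(w_2(t)/w_1(t))$ by only $\pm \ln(1+\gamma) = \Theta(\gamma)$, this sum can be controlled by an amortized argument: once one expert is $\Omega(\gamma^{-1})$ misses ahead, the corresponding marginal is concentrated near $1$ and further updates move $p_t$ by exponentially small amounts, so the total variation on any monotone segment telescopes to $O(1)$, and the number of sign reversals of $C_2(t)-C_1(t)$ contributes an additional $O(\gamma^{-1})$ factor. Multiplying the resulting $O(\gamma^{-1})$ switches by the per-switch cost $k$ gives an expected switching cost of $O(\gamma^{-1}k)$; adding this to the following-cost bound yields the claimed inequality.

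The main obstacle I foresee is the switching-cost bound in the previous paragraph: while the multiplicative-weights regret analysis is standard, controlling $\sum_t \|p_t - p_{t-1}\|_{\mathrm{TV}}$ against an adversarial miss trajectory requires a careful amortized potential argument, and this is precisely where the hypothesis $\gamma \in (0,\tfrac{1}{4})$ is used, since it ensures that $\ln(1+\gamma)$ linearizes to $\gamma$ with bounded multiplicative slack and that second-order terms in the regret inequality remain absorbed into the $(1+\gamma)$ factor. Once this coupling lemma is in hand, combining it with the expert regret bound completes the proof.
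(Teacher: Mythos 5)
First, note that the paper itself gives no proof of this statement: Theorem~\ref{theorem_combining_stohastic} is imported from \cite{combining}, so the only meaningful comparison is with the technique of that reference (a multiplicative-weights distribution over the two simulated strategies, realized by a lazily coupled random index, with at most $k$ extra misses charged per index change). Your architecture — simulate both strategies, mirror the sampled one, sample the index via a maximal coupling so that $\Pr[i_t \neq i_{t-1}] = \|p_t - p_{t-1}\|_{\mathrm{TV}}$, and pay at most $k$ per switch — is exactly that skeleton, and the regret part of your analysis is standard and fine.

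The gap is the switching-cost bound. With learning rate $\Theta(\gamma)$ (weights $(1+\gamma)^{-C_i(t-1)}$) the expected number of switches is \emph{not} $O(\gamma^{-1})$ against an adversarial trajectory. Counterexample: let the two strategies miss on alternating rounds, so $C_2(t)-C_1(t)$ oscillates between $0$ and $1$ forever. Then $p_t$ stays near $(1/2,1/2)$ and each update moves it by $\Theta(\gamma)$ in total variation, so the coupled index flips $\Theta(\gamma T)$ times in expectation; the number of sign reversals of $C_2-C_1$ is $\Theta(T)$, not $O(\gamma^{-1})$, so the amortization you sketch cannot work. The resulting switching cost is $\Theta(\gamma k T) = \Theta(\gamma k \min\{C_1,C_2\})$, i.e.\ a multiplicative overhead of $1+\Theta(\gamma k)$, which breaks the claimed $(1+\gamma)$ factor once $k$ exceeds a constant. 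The standard repair (and what \cite{combining} actually does) is not to bound the absolute number of switches but the per-round mass movement by the learning rate times that round's expected loss: since weights only decrease, $\|p_t - p_{t-1}\|_{\mathrm{TV}} \le \epsilon \sum_i p_{t-1}(i)\,\ell_i(t)$, hence the expected switching cost is at most $k\epsilon$ times the expected following cost. Choosing $\epsilon = \Theta(\gamma/k)$ (rather than $\Theta(\gamma)$) absorbs the switching cost into the multiplicative factor, and the regret's additive term $\Theta(\epsilon^{-1})$ then becomes the stated $O(\gamma^{-1}k)$; this is also precisely why the additive term in the theorem scales with $k$. With that change of parameter and per-round TV bound your plan goes through; as written, the claimed coupling lemma ($O(\gamma^{-1})$ switches) is false.
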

For the Predictive Marker algorithm  (\cite{PredictiveMarker}) it is known that its competitive ratio is bounded by
$$
2 + O\Big(\min\Big\{\log k, \sqrt{\frac{\eta}{\mathrm{OPT}}}\Big\}\Big),
$$
where $\eta$ is $L_1$ norm of the prediction loss (its formal definition will be given later).

A modified version of the Marker algorithm proposed in \cite{PredictiveMarkerImprovement1} has a competitive ratio  
$$
O\Big(1 + \min\Big\{\log k, \frac{\log k}{k} \frac{\eta}{\mathrm{OPT}}\Big\}\Big).
$$
In \cite{PredictiveMarkerImprovement2} it was shown that the \textsc{BlindOracle}, which operates like the Belady algorithm using predictions (i.e., evicting the page with the largest predicted next request time), has a competitive ratio at most
$$
\min\Big\{1 + 2\frac{\eta}{\mathrm{OPT}}, 4 + \frac{4}{k - 1} \frac{\eta}{\mathrm{OPT}}\Big\}.
$$
Combining \textsc{BlindOracle} with LRU (Theorem \ref{theorem_combining_deterministic}) allows to achieve the competitive ratio of 

$$
2 \min \Big\{k, \min\Big\{1 + 2\frac{\eta}{\mathrm{OPT}}, 4 + \frac{4}{k - 1} \frac{\eta}{\mathrm{OPT}}\Big\}\Big\},
$$
(which is also asymptotically optimal among deterministic strategies, as was shown in \cite{PredictiveMarkerImprovement2}) while combining \textsc{BlindOracle} with Marker  (Theorem \ref{theorem_combining_stohastic}) provides the competitive ratio bounded by

$$
(1 + \gamma)\min\Big\{H_k, \min\Big\{1 + 2\frac{\eta}{\mathrm{OPT}}, 4 + \frac{4}{k - 1} \frac{\eta}{\mathrm{OPT}}\Big\}\Big\}.
$$
Here $H_k = 1 + \frac{1}{2} + \frac{1}{3} + \ldots + \frac{1}{k} = \ln(k) + O(1)$ --- $k$-th harmonic number.

Finally \cite{ExperimentalComparing} is a practical comparative study of the mentioned algorithms (\cite{PredictiveMarker, PredictiveMarkerImprovement1, PredictiveMarkerImprovement2, MTS_ML}), and \textsc{BlindOracle} demonstrates one of the best results.

\section{Our Contribution}\label{Sect:Contribution}
First, we provide a more detailed analysis on the upper bound on the competitive ratio of the \textsc{BlindOracle} algorithm and improve it from
$$
\min\Big\{1 + 2\frac{\eta}{\mathrm{OPT}}, 4 + \frac{4}{k - 1} \frac{\eta}{\mathrm{OPT}}\Big\},
$$
presented in   \cite{PredictiveMarkerImprovement2}, to
$$
\min\Big\{1 + \frac{\eta}{\mathrm{OPT}}, 3 + \frac{3}{k} \frac{\eta}{\mathrm{OPT}}\Big\}.
$$
This refinement, when combined with Theorem \ref{theorem_combining_deterministic} and Theorem \ref{theorem_combining_stohastic}, implies that there exists a deterministic algorithm with the competitive ratio
$$
2 \min \Big\{k, \min\Big\{1 + \frac{\eta}{\mathrm{OPT}}, 3 + \frac{3}{k} \frac{\eta}{\mathrm{OPT}}\Big\}\Big\},
$$
and a randomized algorithm with the competitive ratio
$$
(1 + \gamma)\min\Big\{H_k, \min\Big\{1 + \frac{\eta}{\mathrm{OPT}}, 3 + \frac{3}{k} \frac{\eta}{\mathrm{OPT}}\Big\}\Big\}
$$
for any $\gamma \in (0, \frac{1}{4})$. 

Furthermore, we introduce a new randomized algorithm, referred to as the \textsc{AlternatingOracle} (Algorithm \ref{alg:AlternatingOracle}). A detailed description of this algorithm is provided in Section \ref{Sec:Alg}. We demonstrate that its competitive ratio is $O\left(\sqrt{\frac{1}{k} \cdot \frac{\eta}{\mathrm{OPT}}}\right)$. When combined with the \textsc{PredictiveMarker} algorithm (as stated in Theorem \ref{theorem_combining_stohastic}), this results in a competitive ratio of $O\left(\min\left\{\sqrt{\frac{1}{k} \cdot \frac{\eta}{\mathrm{OPT}}}, \log k\right\}\right)$.

\section{Problem statement} \label{Sect:Statement}

A cache (or buffer state) $\mathcal{C}_\mathcal{Q} \subset \mathcal{S}$ is a subset of elements called \textit{pages} from a given set $\mathcal{S}$. Here, $\mathcal{Q}$ denotes the eviction strategy managing the cache. The size of a cache $k\geq 1$ is the maximal number of pages it can contain, i.e. $|\mathcal{C}_\Q| \leq k$. Let $\sigma$ be a sequence of \textit{requests} of length $T$. We denote its elements as $\sigma(1), \sigma(2), \ldots, \sigma(T)$, where $\sigma(t) \in \mathcal{S}$ is a page requested at time step $t$. Each request refers to a single page. 

We also introduce the notation
$$
\nu(t) := \min \{s > t \ | \ \sigma(s) = \sigma(t)\},
$$
where $\nu(t) \in [1, T]$ represents the time of the next request of page $\sigma(t)$. In case the minimum is undefined, we set $\nu(t)$ to $T + 1$.

In this paper, we focus on algorithms equipped with an oracle. For each request $\sigma(t)$, the oracle provides a prediction $\omega(t)$ that estimates the value $\nu(t)$. We define an oracle loss on request $\sigma(t)$ as $\eta(t) := |\nu(t) - \omega(t)|$ and the total loss $\eta := \sum_{t = 1}^T \eta(t) = \sum_{t=1}^T |\nu(t) - \omega(t)|$.

At each time step $t \in [1, T]$ any \textit{feasible} algorithm $\mathcal{Q}$ observes the current request $\sigma(t)$, oracle prediction $\omega(t)$, and buffer state $\mathcal{C}_\Q(t-1)$, evicts page $p_\Q(t) = \mathcal{Q}\left (\sigma(1), \omega(1), \dots, \sigma(t), \omega(t), \mathcal{C}_\Q(t-1) \right)$, and updates cache $\mathcal{C}_{\Q}(t) = \sigma(t) \cup \left (\mathcal{C}_{\Q}(t-1) \setminus \{p_\Q(t)\} \right )$. 

The objective of the algorithm is to minimize the total number of cache misses:
$$
\mathrm{OBJ}_{\mathcal{Q}}(\sigma) = \sum_{t=1}^T \mathrm{I}\{\sigma(t) \notin \mathcal{C}_{\mathcal{Q}}(t-1)\}.
$$
We denote by $\mathrm{OPT}$ for given input $(\sigma, \omega)$ the total number of hit misses produced by Belady's algorithm (which is offline):
$$
\mathrm{OPT}(\sigma, \omega) = \sum_{t=1}^T \mathrm{I}\{\sigma(t) \notin \mathcal{C}_{Belady}(t - 1)\},
$$
which serves as a lower bound for any feasible algorithm. 

We say that an algorithm $\mathcal{Q}$ is \emph{$c$-competitive} if there exists a constant $c > 1$ (called the \textit{competitive ratio}) such that, for any input $(\sigma, \omega)$, the following inequality holds:
\[
\mathrm{OBJ}_{\mathcal{Q}}(\sigma, \omega) \leq c \cdot \mathrm{OPT}(\sigma, \omega).
\]
For randomized algorithms, we consider the expected value $\mathbb{E}[\mathrm{OBJ}_\mathcal{Q}]$ instead of $\mathrm{OBJ}_\mathcal{Q}$, where the expectation is taken over the internal randomness of the algorithm. In this case, we evaluate the average performance of the algorithm on any given input.

Since we reference pages by their indices, in addition to maintaining $\mathcal{C}_\mathcal{Q}(t)$, we also keep track of a set $\mathcal{I}_\mathcal{Q}(t)$, defined as:
\[
\mathcal{I}_\mathcal{Q}(t) := \left\{ \max \{t' \leq t \mid \sigma(t') = s\} \mid s \in \mathcal{C}_\mathcal{Q}(t) \right\}.
\]
In other words, $\mathcal{I}_\mathcal{Q}(t)$ contains the most recent time step of the last request for each page currently in the cache. 

There are two scenarios in which an index $i$ belongs to $\mathcal{I}_\mathcal{Q}(t) \setminus \mathcal{I}_\mathcal{Q}(t+1)$:
\begin{itemize}
    \item The page $\sigma(i)$ was evicted while processing the request $\sigma(t+1)$, i.e., $p_\Q(t+1) = \sigma(i)$, or
    \item The request $\sigma(t+1)$ corresponds to the same page as $\sigma(i)$, and the index $i$ is replaced by $t+1$.
\end{itemize}

Finally, $i \in \mathcal{I}_\mathcal{Q}(t+1) \setminus \mathcal{I}_\mathcal{Q}(t)$ if and only if $i = t + 1$.

Furthermore, when we say that a page $\sigma(i)$ is stored in the cache at time step $t$, we typically imply not only that $\sigma(i) \in \mathcal{C}_\mathcal{Q}(t)$, but also that $i \in \mathcal{I}_\mathcal{Q}(t)$.

\section{Algorithm}\label{Sec:Alg}

Next, we present our algorithm (Algorithm~\ref{alg:AlternatingOracle}). The core idea is to combine three distinct eviction strategies. The first is \textsc{BlindOracle}~\cite{PredictiveMarkerImprovement2}, which evicts the page with the largest predicted value $w(s)$. The second strategy, \textsc{RandomAlg}, performs a random eviction by selecting a uniformly random page from the cache. The third strategy, \textsc{Corrector}, targets pages whose predicted next request time has already expired; that is, at time $t$, it evicts a page $\sigma(s)$ with $w(s) < t$. If there are multiple such pages, the algorithm evicts the one with the largest prediction error, i.e., the page maximizing $t - \omega(s)$ over $s \in \I(t)$, which is equivalent to minimizing $\omega(s)$. If no such pages exist, \textsc{Corrector} defaults to using \textsc{BlindOracle}.

We note that, for the purposes of our analysis, it would be sufficient for \textsc{Corrector} to always evict the page $\sigma(s)$ with the minimal value of $\omega(s)$, without reverting to \textsc{BlindOracle}. However, we chose not to do this, as such behavior may appear counterintuitive.

The rationale for combining these strategies is as follows. The \textsc{BlindOracle} performs optimally among deterministic algorithms and is effective in typical scenarios. The \textsc{RandomAlg} strategy is designed to address adversarial inputs of the form $\sigma(t+1) = p(t)$, where a page evicted in the previous step is immediately requested again—an approach that can defeat any deterministic algorithm. Finally, the \textsc{Corrector} strategy is intended to handle pages with certified prediction errors, ensuring that such inaccuracies are promptly corrected.

\begin{algorithm}[h] 
\caption{\textsc{AlternatingOracle}}
\label{alg:AlternatingOracle}
    \textbf{Input}: Request $\sigma(t)$ abscent in the cache, pages stored in the cache, evictions history\\
    \begin{algorithmic}[1] 
    \IF{page $\sigma(t)$ was not requested previously}
    \STATE \label{AlternatingOracle::line1} \textsc{BlindOracle}: Evict page with maximal value of $\omega$ in the cache
    \ELSE
    \IF{page $\sigma(t)$ was last evicted using \textsc{BlindOracle}}
    \STATE \label{AlternatingOracle::line2}  RandomAlg: Evict page chosen uniforly from the cache
    \ENDIF
    \IF{page $\sigma(t)$ was last evicted using RandomAlg}
    \STATE \label{AlternatingOracle::line3} Evict page using Corrector (Alg. \ref{alg:Corrector})
    \ENDIF
    \IF{page $\sigma(t)$ was last evicted using Corrector}
    \STATE \label{AlternatingOracle::line4} \textsc{BlindOracle}: Evict page with maximal value of $\omega$ in the cache
    \ENDIF
    \ENDIF
    \end{algorithmic}
\end{algorithm}

\begin{algorithm}[h] 
\caption{Corrector}
\label{alg:Corrector}
    \textbf{Input}: Request $\sigma(t)$ abscent in the cache, pages stored in the cache\\
    \begin{algorithmic}[1] 
    \STATE  \label{Corrector::line1} $W := \{\sigma(i) \ | \ \sigma(i) \text{ is in the cache and }\omega(i) < t \text{, i.e. prediction is provably wrong} \}$
    \IF{$W$ is not empty}
    \STATE  \label{Corrector::line2} Evict page $\sigma(i)$ from $W$ with minimal value of $\omega(i)$
    \ELSE
    \STATE \label{Corrector::line3} Evict page using \textsc{BlindOracle}
    \ENDIF
    \end{algorithmic}
\end{algorithm}

\begin{theorem} \label{Theorem_MainResult}
    The \textsc{AlternatingOracle} algorithm is $O\left(\sqrt{\frac{1}{k} \cdot \frac{\eta}{\mathrm{OPT}}}\right)$ -competitive.
\end{theorem}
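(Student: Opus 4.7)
The proof plan is to employ the eviction graph framework from Section~\ref{Sect:EvictionGraph}, adapting it to cover both the randomization and the three-strategy cycle of \textsc{AlternatingOracle}. Each miss of the algorithm is charged either to a miss of Belady or to a concrete piece of the prediction error $\eta$, and the expected total charge then bounds $\mathbb{E}[\mathrm{OBJ}]$ from above.

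The first step is to partition the misses by the strategy responsible for the eviction, writing $\mathrm{OBJ}=M_{BO}+M_{RA}+M_C$. The rotation rule of Algorithm~\ref{alg:AlternatingOracle} implies that every \textsc{BlindOracle} eviction of a page that is later re-requested is paired with exactly one subsequent \textsc{RandomAlg} eviction (at the re-request step), and similarly an RA eviction is paired with a \textsc{Corrector} one; hence the three counts agree up to a lower-order term absorbed by $O(\mathrm{OPT}+k)$, and it suffices to bound each one separately in expectation. For $M_{BO}$ and $M_C$ I would reuse the eviction-graph construction that already establishes the refined $3+\tfrac{3}{k}\tfrac{\eta}{\mathrm{OPT}}$ bound for \textsc{BlindOracle}: any eviction done by BO or Corrector that disagrees with Belady is chargeable to a prediction error of magnitude at least $k$, contributing a deterministic $O(\mathrm{OPT}+\eta/k)$ term.

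The new ingredient is the analysis of $M_{RA}$. A uniformly random eviction hits any given cached page with probability $1/k$, so an adversarial attempt to exploit a previous \textsc{BlindOracle} step (for example, by immediately re-requesting the evicted page) yields a miss only with probability $1/k$; and whenever such a miss does materialize, the subsequent \textsc{Corrector} phase provably identifies a prediction error of order $k$, feeding that error back into the eviction graph. Aggregating these local guarantees over all RA evictions and applying an AM--GM (equivalently Cauchy--Schwarz) balance between the $\mathrm{OPT}$ term and the $\eta/k$ term yields $\mathbb{E}[M_{RA}]=O(\sqrt{\mathrm{OPT}\cdot\eta/k})$; dividing by $\mathrm{OPT}$ gives the advertised competitive ratio $O(\sqrt{\eta/(k\,\mathrm{OPT})})$.

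The main obstacle is constructing one coherent eviction graph that is simultaneously valid across the three-phase cycle and in expectation over the randomness of RA. Random evictions in the RA phase determine which pages \textsc{Corrector} can later mark as provably mispredicted, and hence which BO-edges are available in the next cycle, so the charging scheme must avoid double-counting the same prediction error across BO, RA, and C while still extracting the factor $1/k$ from each random step. The square root in the final bound is not a direct feature of any single strategy; it emerges as the geometric mean of the deterministic $\mathrm{OPT}$ contribution and the randomized $\eta/k$ contribution once the balancing inequality is applied, and verifying that the graph built from the three phases satisfies the hypotheses of the framework from Section~\ref{Sect:EvictionGraph} is where the bulk of the technical work lies.
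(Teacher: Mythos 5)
There is a genuine gap, and it is structural rather than technical. Your decomposition bounds $M_{BO}$ and $M_{C}$ by a \textsc{BlindOracle}-style deterministic term $O(\mathrm{OPT}+\eta/k)$ and only the \textsc{RandomAlg} misses by $O\bigl(\sqrt{\mathrm{OPT}\cdot\eta/k}\bigr)$. Summing these can only give a competitive ratio of order $1+\frac{\eta}{k\,\mathrm{OPT}}+\sqrt{\frac{\eta}{k\,\mathrm{OPT}}}$, and in the regime $\eta> k\,\mathrm{OPT}$ --- exactly where the theorem improves on the known $3+\frac{3}{k}\frac{\eta}{\mathrm{OPT}}$ bound --- the term $\eta/k$ dominates $\sqrt{\mathrm{OPT}\cdot\eta/k}$, so the claimed $O\bigl(\sqrt{\tfrac{1}{k}\tfrac{\eta}{\mathrm{OPT}}}\bigr)$ ratio cannot be recovered from this split. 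The theorem requires that \emph{all} mistakes, including those of \textsc{BlindOracle} and \textsc{Corrector}, be charged at the square-root rate. Moreover, the local charging claims are not available: a single BO or Corrector mistake is not ``chargeable to a prediction error of magnitude at least $k$''; the actual mechanism behind Corollary~\ref{Cor_2} is the inequality $L+M(y)\geq k$ (reuse distance plus inversions), where the $L$-terms can only be controlled by telescoping non-overlapping same-color evictions against a single better vertex, and the inversion count is bounded globally by $2\eta$, not per mistake. Similarly, nothing guarantees that a materialized \textsc{RandomAlg} miss lets \textsc{Corrector} ``identify a prediction error of order $k$.''

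The square root in the paper does not come from an AM--GM balance between an $\mathrm{OPT}$ term and an $\eta/k$ term (there is no tunable parameter to balance); it comes from quadratic error accumulation along chains. Lemma~\ref{Lemma_ColoredErrorGraphExistence} produces an eviction graph in which same-colored edges share their tail vertex and their heads form a chain, with at most $3\cdot\mathrm{OPT}$ colors. Because the algorithm alternates BO/RA/Corrector along each chain, Lemma~\ref{Lemma_LongChain} shows that a chain with $d$ edges forces $\sum_i\eta(b_i)\gtrsim (d/6-\psi)^2k$: every RA link has length at least $k/10$ unless the uniform draw lands in the bottom tenth of the cache, and the $\omega$-comparisons of the BO (resp.\ Corrector) evictions against the single tail vertex telescope, so the errors grow linearly along the chain and sum quadratically. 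A Cauchy--Schwarz over the $O(\mathrm{OPT})$ colors then yields $|E|\lesssim\sqrt{\eta\,\mathrm{OPT}/k}$ plus the short-triple corrections, and randomness is needed only to show $\mathbb{E}\bigl[\sum_i\psi_i\bigr]\leq 0.15\,\mathbb{E}[\mathrm{OBJ}_\mathcal{B}]$ via the Chernoff/stopping argument of Lemma~\ref{Lemma_RandAlg} (the number of random draws is itself random, so a naive per-step expectation does not suffice). Your proposal contains neither the per-chain quadratic lower bound nor this concentration step, and these are precisely where the $\sqrt{\cdot}$ and the handling of adaptively many random evictions come from.
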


We present a brief outline of our proof. Firstly, we introduce the concept of the eviction graph. Informally, the edges of this graph represent mistakes made by the eviction strategy, i.e., instances where the strategy deviates from the optimal offline algorithm (Belady's algorithm). Inspired by \cite{PredictiveMarkerImprovement2}, we establish an upper bound on the competitive ratio in terms of the number of edges in the eviction graph (Lemma \ref{Lemma_ErrorGraphExistence}). Then we introduce a more refined technique for constructing the eviction graph with colored edges (Lemma \ref{Lemma_ColoredErrorGraphExistence}). We demonstrate that the number of edges of each color is bounded by the oracle's loss (Lemma \ref{Lemma_LongChain}). We then sum up over all colors and use the fact that, from one side, the gap between algorithm performance and optimum is bounded by the total number of edges, while on the other side, the number of edges is bounded by the oracle's loss, which allows us to establish a connection between performance and oracle's loss (Lemma \ref{Lemma_MainResult}).

\section{Eviction graph}\label{Sect:EvictionGraph}

Suppose that after processing query $\sigma(t)$, algorithm $\mathcal{Q}$ evicts page $\sigma(i) = p_{\mathcal{Q}}(t)$. This is followed by a cache miss at time step $t' = \nu(i)$, i.e., $\sigma(t') = \sigma(i)$, and processing $\sigma(t')$ results in the eviction of page $\sigma(j) = p_{\mathcal{Q}}(t')$. In this case, we say that the eviction of $\sigma(i)$ \textit{triggers} the eviction of $\sigma(j)$.

We say that a tuple $(s_1, \dots, s_d)$ is a \textit{chain} if for each $i$, $1 \leq i \leq d-1$, the eviction of $\sigma(s_i)$ triggers the eviction of $\sigma(s_{i+1})$ (see Fig.~\ref{fig:ChainExample}).

\begin{figure}[h]
    \centering
    \includegraphics[width=0.95\textwidth]{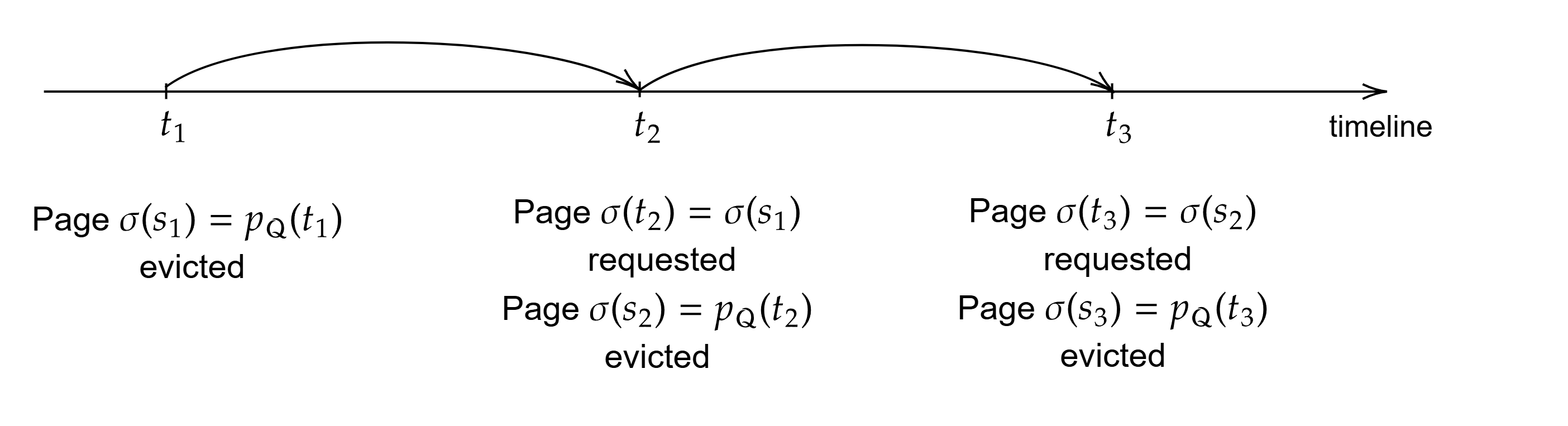}
    \caption{An example of chain $(s_1, s_2, s_3)$.}
    \label{fig:ChainExample}
\end{figure}

We also say that evictions of pages $p_\Q(t)$ and $p_\Q(t')$, $t < t'$, do not \textit{overlap}, if eviction of page $p_\Q(t)$ triggered eviction of page $p_\Q(\tilde{t})$ and $\tilde{t} \leq t'$. We note that according to this definition all evictions in the chain do not overlap each other.

Consider some arbitrary algorithm $\mathcal{Q}$ at the arrival of request $\sigma(t+1)$. Let $\sigma(i)$ and $\sigma(j)$ be in the cache, i.e. $i, j \in \mathcal{I}_\Q(t)$. We say that page $\sigma(i)$ is \textit{better} (as eviction candidate) than $\sigma(j)$, if $\nu(i) > \nu(j)$. We say that page $\sigma(i)$ is optimal if it is not worse than any other page stored in the cache. Notice that according to this definition, the optimal offline algorithm (Belady) always evicts only optimal pages. We say that algorithm $\Q$ \textit{made a mistake} if it evicted a non-optimal page.

Before formulating the following lemma, we need to define the \textit{right} and \textit{left} losses for each prediction $\omega(t)$:
\[
\eta^+(t) := \max\{\omega(t) - \nu(t), 0\}, \qquad
\eta^-(t) := \max\{\nu(t) - \omega(t), 0\}.
\]
Clearly, $\eta(t) = \eta^+(t) + \eta^-(t)$.

\begin{lemma} \label{Lemma_LeftRightError}
    Let $\mathcal{B}$ denote the \textsc{BlindOracle} algorithm. Suppose that, when processing the request $\sigma(t)$, algorithm $\mathcal{B}$ makes a mistake by evicting a non-optimal page $\sigma(j)$, even though there exists a better page $\sigma(i)$ in the cache with $\nu(i) > \nu(j)$. Then the following inequality holds:
    \[
        \eta^-(i) + \eta^+(j) \geq \nu(i) - \nu(j).
    \]
\end{lemma}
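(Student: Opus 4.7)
The plan is to exploit the definition of \textsc{BlindOracle} directly: since $\mathcal{B}$ evicts the page with the largest predicted next-request time, the fact that $\sigma(j)$ was chosen over $\sigma(i)$ forces an inequality between the predictions $\omega(i)$ and $\omega(j)$. Concretely, recall that $\omega(i)$ is the prediction attached to the most recent request of $\sigma(i)$ (time step $i \in \mathcal{I}_{\mathcal{Q}}(t-1)$), which estimates $\nu(i)$; similarly $\omega(j)$ estimates $\nu(j)$. Because both $\sigma(i)$ and $\sigma(j)$ are in the cache at time $t$ and $\mathcal{B}$ evicts $\sigma(j)$, the tie-breaking rule of \textsc{BlindOracle} implies
\[
\omega(j) \;\geq\; \omega(i).
\]

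Next I would use the trivial lower bounds that follow from the $\max\{\cdot,0\}$ definitions of the one-sided errors:
\[
\eta^-(i) \;=\; \max\{\nu(i)-\omega(i),0\} \;\geq\; \nu(i) - \omega(i),
\qquad
\eta^+(j) \;=\; \max\{\omega(j)-\nu(j),0\} \;\geq\; \omega(j) - \nu(j).
\]
Adding these two inequalities and rearranging gives
\[
\eta^-(i) + \eta^+(j) \;\geq\; \bigl(\nu(i) - \omega(i)\bigr) + \bigl(\omega(j) - \nu(j)\bigr) \;=\; \bigl(\nu(i) - \nu(j)\bigr) + \bigl(\omega(j) - \omega(i)\bigr).
\]
Applying $\omega(j) - \omega(i) \geq 0$ from the first step yields $\eta^-(i) + \eta^+(j) \geq \nu(i) - \nu(j)$, which is exactly the claim.

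There is essentially no hard step here; the only point worth being careful about is the bookkeeping of \emph{which} prediction goes with which page. The prediction $\omega(i)$ is the one generated when $\sigma(i)$ was last requested (at step $i$), and analogously for $\omega(j)$; these are precisely the values \textsc{BlindOracle} stores with the cached pages and compares when choosing a victim. Once this is pinned down, the lemma follows by two applications of the definition of $\eta^{\pm}$ together with the algorithm's greedy rule, with no combinatorial content and no case analysis.
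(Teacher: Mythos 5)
Your proposal is correct and follows exactly the same route as the paper's proof: derive $\omega(j) \geq \omega(i)$ from the greedy eviction rule of \textsc{BlindOracle}, lower-bound $\eta^-(i)$ and $\eta^+(j)$ by $\nu(i)-\omega(i)$ and $\omega(j)-\nu(j)$ respectively, and add. Nothing is missing; the remark about which stored prediction belongs to which cached page matches the paper's bookkeeping via $\mathcal{I}_\mathcal{Q}$.
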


\begin{proof}
    Since algorithm $\mathcal{B}$ preferred to evict $\sigma(j)$ over $\sigma(i)$, by the definition of \textsc{BlindOracle}, we have $\omega(j) \geq \omega(i)$. Observe that
    \[
        \eta^-(i) \geq \nu(i) - \omega(i), \qquad
        \eta^+(j) \geq \omega(j) - \nu(j).
    \]
    Summing these inequalities, we obtain
    \[
        \eta^-(i) + \eta^+(j) \geq (\nu(i) - \omega(i)) + (\omega(j) - \nu(j)) = \nu(i) - \nu(j) + (\omega(j) - \omega(i)) \geq \nu(i) - \nu(j),
    \]
    where the last inequality follows from $\omega(j) \geq \omega(i)$.
\end{proof}

We utilize the notion of an \textit{eviction graph} generated by an arbitrary algorithm $\mathcal{Q}$ while processing the input sequence $\sigma$. An oriented graph $H_{\mathcal{Q}}$ is called an \textit{eviction graph} if it satisfies the following conditions:
\begin{itemize}
    \item Its vertex set is $\{1, 2, \ldots, T\}$;
    \item For any edge $(i, j)$, there exists a time $t+1$ such that algorithm $\mathcal{Q}$ made a mistake by evicting page $\sigma(j)$, even though page $\sigma(i)$ with $\nu(i) > \nu(j)$ was a better candidate to evict. Furthermore, we require that each mistake induces at most one edge: that is, if there are two distinct pages $\sigma(i)$ and $\sigma(i')$ both better than $\sigma(j)$, then at most one of the edges $(i, j)$ or $(i', j)$ may be present in $H_{\mathcal{Q}}$.
\end{itemize}

The eviction graph is defined ambiguously. For example, deleting any of its edges preserves satisfaction of the constraints. Moreover, if algorithm $\mathcal{Q}$ is stochastic, the set of feasible graphs depends on random bit realizations. Therefore, when we state that an eviction graph $H_{\mathcal{Q}}$ exists satisfying certain conditions, we mean that for every random bit realization, such a graph can be constructed (potentially different for each realization).

\begin{lemma} \label{Lemma_ErrorGraphForest}
    Let $\Q$ be an arbitrary algorithm, and let $H_\Q$ be its eviction graph. Then $H_\Q$ is an oriented forest.
\end{lemma}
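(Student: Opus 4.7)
I plan to verify that $H_\Q$ is an oriented forest by establishing two facts: (a) every vertex has in-degree at most $1$ in $H_\Q$, and (b) $H_\Q$ contains no directed cycle. Together they force the underlying undirected graph to be acyclic as well, which is the desired conclusion.

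For (a), fix an index $j \in \{1, \dots, T\}$. The index $j$ can identify a page in $\I_\Q(\cdot)$ only during the interval between time $j$ and the first later moment at which $\sigma(j)$ is either evicted or requested again (in which case the page reappears under a fresh index). Consequently, there is at most one eviction event whose victim is the page registered under index $j$, so $j$ can participate as the second coordinate of a mistake at most once. The rule ``each mistake induces at most one edge'' in the definition of $H_\Q$ then caps the number of edges of the form $(\cdot, j)$ by $1$.

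For (b), every edge $(i, j)$ of $H_\Q$ satisfies $\nu(i) > \nu(j)$, so a directed cycle $i_1 \to i_2 \to \cdots \to i_m \to i_1$ would give the impossible chain $\nu(i_1) > \nu(i_2) > \cdots > \nu(i_m) > \nu(i_1)$. To combine (a) and (b): if an undirected cycle $v_1 - v_2 - \cdots - v_m - v_1$ existed in $H_\Q$, its $m$ edges would contribute $m$ units of in-degree in total; by (a) no vertex can absorb more than one unit, so each $v_r$ absorbs exactly one, which forces the cycle to be directed and contradicts (b). I expect the delicate part to be the bookkeeping in (a): one has to argue cleanly that once $\sigma(j)$ leaves the cache the index $j$ is retired forever, so that it can serve as the evictee-index in at most one mistake; everything else is a short combinatorial argument driven by the strict $\nu$-ordering on edges.
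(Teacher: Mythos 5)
Your proposal is correct and follows essentially the same route as the paper: in-degree at most one (each mistake yields at most one edge, and the victim index $j$ is retired from $\I_\Q$ once evicted) combined with acyclicity from the strict decrease of $\nu$ along edges. Your explicit argument that these two facts rule out undirected cycles is a detail the paper leaves implicit, but the substance is identical.
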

\begin{proof}
    We need to show two properties: (1) $H_\Q$ contains no cycles, and (2) the in-degree of any vertex is at most one. These two properties together imply that $H_\Q$ is an oriented forest.

    First, consider the absence of cycles. By the construction of the eviction graph, if there is an edge $(i, j)$, then $\nu(i) > \nu(j)$. Therefore, along any path in the graph, the value of $\nu$ strictly decreases, preventing cycles.

    Second, consider the in-degree of each vertex. By definition, for each mistake (i.e., each time a non-optimal page is evicted), at most one edge is added to the graph for that event. Furthermore, after a page $\sigma(j)$ is evicted, the corresponding index $j$ is permanently removed from the set $\I_\Q$. Therefore, no further edges can be directed into $j$ after its eviction. Thus, the in-degree of any vertex is at most one.

    Together, these two properties ensure that $H_\Q$ is an oriented forest.
\end{proof}

\begin{lemma} \label{Lemma_EdgeProperties}
    Let $\Q$ be any algorithm, and let $H_\Q$ be its eviction graph. If there is an edge $(i, j)$ in $H_\Q$, then $\nu(j) \leq T$; that is, $\sigma(j)$ is not the last request for page $\sigma(j)$. In particular, this means that for any two distinct edges $(i, j_1)$ and $(i, j_2)$, we have $\nu(j_1) \neq \nu(j_2)$.
\end{lemma}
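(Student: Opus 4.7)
My plan is to dispatch both claims with short arguments, the second following essentially as a corollary of the first. First I would handle $\nu(j) \leq T$ by contradiction, leaning only on the edge-existence requirement $\nu(i) > \nu(j)$ together with the convention that $\nu$ takes values in $[1, T+1]$. Then I would derive the ``in particular'' clause from the first part by a collision argument: if $\nu(j_1) = \nu(j_2)$ were to hold, both would point to the same real time step, which would force $\sigma(j_1) = \sigma(j_2)$ and then contradict the very definition of $\nu$.

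More concretely, for the first statement I fix an edge $(i,j)$ in $H_\Q$. By the defining property of the eviction graph, there is a time at which $\Q$ evicted $\sigma(j)$ while $\sigma(i)$ was in the cache with $\nu(i) > \nu(j)$. Because $\nu$ is bounded above by $T+1$, with the sentinel value $\nu(\cdot) = T+1$ recording the absence of a future occurrence, the case $\nu(j) = T+1$ is immediately inconsistent with $\nu(i) > \nu(j)$. Hence $\nu(j) \leq T$, so $\sigma(j)$ is actually requested again within the horizon.

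For the second claim, let $(i, j_1)$ and $(i, j_2)$ be two distinct edges sharing source $i$; distinctness forces $j_1 \neq j_2$, and without loss of generality I take $j_1 < j_2$. Suppose for contradiction $\nu(j_1) = \nu(j_2) = t'$; by the first part $t' \leq T$, so time $t'$ actually carries a request. Since the request at a single time step refers to a single page, this would give $\sigma(j_1) = \sigma(t') = \sigma(j_2)$. But then the definition $\nu(j_1) = \min\{s > j_1 \mid \sigma(s) = \sigma(j_1)\}$, applied with the candidate $s = j_2$, yields $\nu(j_1) \leq j_2 < \nu(j_2)$, contradicting $\nu(j_1) = \nu(j_2)$.

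I do not anticipate any real obstacle here: both parts are direct unpackings of the definitions. The only mild care is to remember the $T+1$ sentinel for $\nu$ when no future request exists, and to note that two edges out of the same source $i$ may have their targets land on copies of the same page, which is precisely the case ruled out by the definition of $\nu$ combined with $j_1 < j_2$.
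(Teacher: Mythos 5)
Your proposal is correct and follows essentially the same route as the paper: the bound $\nu(j) \leq T$ comes from the edge condition $\nu(i) > \nu(j)$ together with $\nu(i) \leq T+1$. The collision argument for the ``in particular'' clause (equal $\nu$-values would force $\sigma(j_1)=\sigma(j_2)$ and then $\nu(j_1) \leq j_2 < \nu(j_2)$) is precisely the step the paper leaves implicit after ``As a result,'' so you have simply spelled it out.
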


\begin{proof}
    By definition, the page $\sigma(j)$ was mistakenly evicted by algorithm $\Q$, even though page $\sigma(i)$ would have been a better choice. This means that $\nu(i) > \nu(j)$. Since $\nu(i) \leq T + 1$, it follows that $\nu(j) \leq T$. As a result, for any two distinct edges $(i, j_1)$ and $(i, j_2)$, we must have $\nu(j_1) \neq \nu(j_2)$.
\end{proof}

Next, we introduce Lemma \ref{Lemma_ErrorGraphExistence}, which is a slightly adjusted version of the supporting lemma from \cite{PredictiveMarkerImprovement2} and the proof follows \cite{PredictiveMarkerImprovement2}.

\begin{lemma} \label{Lemma_ErrorGraphExistence}
    For any algorithm $\Q$, there exists an eviction graph $H_\Q$ such that
    $$
    \mathrm{OBJ}_\Q \leq \mathrm{OPT} + |E|,
    $$
    where $E$ is the set of edges in $H_\Q$.
\end{lemma}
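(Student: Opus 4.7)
The plan is to construct $H_\Q$ by running $\Q$ alongside Belady on the input $\sigma$ and installing edges that certify the mistakes $\Q$ makes along the way. The core observation is that $\mathrm{OBJ}_\Q - \mathrm{OPT}$ equals the number of time steps $t$ at which $\Q$ has an \emph{extra} miss, meaning $\sigma(t) \notin \mathcal{C}_\Q(t-1)$ while $\sigma(t) \in \mathcal{C}_{Belady}(t-1)$; my goal is to inject these extra misses into the edge set $E$ of $H_\Q$, so that $|E| \geq \mathrm{OBJ}_\Q - \mathrm{OPT}$.

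First, for each extra miss at some time $t$, I would locate the time $t_0 < t$ at which $\Q$ performed its most recent eviction of the page $\sigma(t)$, and let $j \in \mathcal{I}_\Q(t_0-1)$ be the corresponding index, so that $\nu(j) = t$. I then need to exhibit a witness $i \in \mathcal{I}_\Q(t_0-1)$ with $\nu(i) > \nu(j)$, certifying that the eviction of $\sigma(j)$ at time $t_0$ was a mistake, and add the edge $(i,j)$ to $H_\Q$. Because each index $j$ is evicted at most once by $\Q$, distinct extra misses automatically correspond to distinct values of $j$, and hence to distinct edges, which also preserves the in-degree-at-most-one condition underlying Lemma~\ref{Lemma_ErrorGraphForest}.

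The main obstacle is showing that the witness $i$ always exists. I would establish this by a coupling argument tracking the symmetric difference between $\Q$'s cache and Belady's cache inductively over time. The intuition is that since Belady keeps $\sigma(t)$ through time $t$ while $\Q$ evicts it at time $t_0$, and both caches contain exactly $k$ pages, $\Q$'s cache at time $t_0$ must contain at least one page whose next request time strictly exceeds $t = \nu(j)$; this page supplies $i$. The comparison is delicate because $\Q$ and Belady may differ in multiple cache slots at once, so the inductive invariant has to track enough structural information about the two caches (e.g., that every page in $\mathcal{I}_{Belady}(t_0-1) \setminus \mathcal{I}_\Q(t_0-1)$ either has $\nu > t$ or is matched to a previously-added edge) to guarantee a witness at every extra miss.

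Finally, once the construction terminates, each extra miss of $\Q$ over Belady has been charged to a distinct edge of $H_\Q$, yielding $\mathrm{OBJ}_\Q - \mathrm{OPT} \leq |E|$, which rearranges to the claimed bound $\mathrm{OBJ}_\Q \leq \mathrm{OPT} + |E|$.
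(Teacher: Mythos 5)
The key step of your plan---that every ``extra miss'' of $\Q$ (a time where $\Q$ misses while Belady hits) can be traced to a \emph{mistake} at the time $t_0$ when $\Q$ evicted that page, so that a witness $i\in\I_\Q(t_0-1)$ with $\nu(i)>\nu(j)$ always exists---is false, and with it the whole per-miss charging scheme collapses. Once the two caches have diverged because of one earlier mistake, $\Q$ may afterwards evict the page that is \emph{optimal within its own cache} (so no witness exists, and by the definition of the eviction graph no edge may be created for that eviction), and still suffer an extra miss because Belady's different cache retained that page. Concretely, take $k=2$ and requests $A,B,C,A,B,C,A$. Let $\Q$ evict $A$ on the miss at time $3$ (its only mistake, since $\nu(B)=5>\nu(A)=4$), and on the miss at time $4$ evict $C$, which is optimal in its cache $\{B,C\}$ because $\nu(C)=6>\nu(B)=5$. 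Belady evicts $B$ at time $3$ and $A$ at time $5$. Then $\Q$ has extra misses at times $4$ and $6$, but only one mistake was ever made, so any valid eviction graph has at most one edge: your injection of extra misses into distinct edges is impossible, and in particular the eviction at $t_0=4$ underlying the time-$6$ extra miss has no witness. Note also that your opening identity is off: $\mathrm{OBJ}_\Q-\mathrm{OPT}$ equals (extra misses of $\Q$) minus (times Belady misses while $\Q$ hits, e.g.\ time $5$ above), not the number of extra misses alone; it is exactly these compensating Belady misses that save the inequality in the example ($5\le 4+1$), and your scheme has no mechanism to use them.

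This is why the paper's proof is amortized rather than a direct charging argument: it runs $\A$ (Belady) and $\Q$ in parallel, maintains a partial matching $X(t)$ between $\I_\A(t)$ and $\I_\Q(t)$ whose edges $(i,j)$ satisfy $\nu(i)\ge\nu(j)$, and uses the potential $\Phi(t)$, the number of isolated vertices on the $\I_\Q$ side, proving by induction that $\mathrm{OBJ}_\Q(t)+\Phi(t)\le\mathrm{OBJ}_\A(t)+|E(t)|$. Steps where Belady misses and $\Q$ hits raise $\mathrm{OBJ}_\A$ and allow $\Phi$ to increase; later extra misses of $\Q$ that are not genuine mistakes are paid for by decreasing $\Phi$ or by re-matching, and a new edge is added only when every isolated candidate $\tilde b$ satisfies $\nu(\tilde b)>\nu(b)$, which is precisely when a mistake is certified. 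Your sketched invariant (``every page in Belady's cache minus $\Q$'s cache either has $\nu>t$ or is matched to a previously-added edge'') already fails in the example above; repairing it would essentially force you to reconstruct this matching-plus-potential bookkeeping, which is the substance of the proof rather than a deferred detail.
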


\begin{proof}
Let $\A$ denote the optimal offline algorithm (Belady), so that $\text{OBJ}_\A = \text{OPT}$. We will simulate the execution of both algorithms $\A$ and $\Q$ in parallel, each maintaining its own cache.

We construct a bipartite graph $X(t)$ at each time $t$, with partitions $\I_\A(t)$ and $\I_\Q(t)$. This graph satisfies two invariants:
\begin{itemize}
    \item Its edges are disjoint (no two edges share a vertex).
    \item If there is an edge $(i, j)$, then $\nu(i) \geq \nu(j)$.
\end{itemize}

Consider the moment when request $\sigma(t+1)$ arrives after processing request $\sigma(t)$. We will need the following fact: if the page $\sigma(t+1)$ is present in the cache of algorithm $\A$, i.e., $j \in \I_\A(t)$, where $\sigma(j)$ is the previous request for page $\sigma(t+1)$, and if in the graph $X(t)$ the vertex $j \in \I_\A(t)$ is connected by an edge to the vertex $i \in \I_\Q(t)$, then $i = j$. Indeed, by definition, $t+1 = \nu(j) \geq \nu(i)$. However, if $\nu(i) \leq t$, then index $i$ cannot belong to $\mathcal{I}_\mathcal{Q}(t)$, as it must have already been updated. Consequently, $\nu(i) = \nu(j) = t+1 \leq T$, which implies $i = j$.

Let $\Phi(t)$ denote the number of isolated vertices in the partition $\I_\Q(t)$ of the graph $X(t)$. We will show that for any $t$, the graphs $X(t)$ and $H_\Q(t)$ can be constructed in such a way that the following inequality holds:
\begin{equation}\label{Lemma1MainIneq}
\mathrm{OBJ}_\Q (t) + \Phi(t) \leq \mathrm{OBJ}_\A(t) + |E(t)|.
\end{equation}
Here, $H_\Q(t)$ is the eviction graph constructed after processing the first $t$ requests, and $|E(t)|$ is the number of edges in $H_\Q(t)$. The notation $\Delta F$ will denote $F(t+1) - F(t)$, where $F$ is any of the functions.

The proof proceeds by induction on $t$. For all $t$ corresponding to moments when the caches are not yet full, we define $X(t)$ as a perfect matching where the edges are of the form $(i, i)$. This is clearly possible since the cache contents are completely identical. The eviction graph will contain no edges. In this case, both sides of inequality \eqref{Lemma1MainIneq} are equal.

Now, assume the caches are full, request $\sigma(t)$ has been processed, and request $\sigma(t+1)$ has arrived. By the induction hypothesis, we already have a constructed matching $X(t)$ and a set of edges $E(t)$. Let us analyze the possible cases.

\begin{enumerate}
\item The page $\sigma(t+1)$ is present in both the cache of algorithm $\A$ and in the cache of algorithm $\Q$. This means that: 
\begin{itemize} 
\item $\I_\A(t+1) = \I_\A(t) \cup \{t+1\} \setminus \{j\}$,
\item $\I_\Q(t+1) = \I_\Q(t) \cup \{t+1\} \setminus \{j\}$,
\end{itemize}
where $\sigma(j)$ denotes the previous request for page $\sigma(t+1)$. The graph $X(t+1)$ is derived from $X(t)$ by:
\begin{itemize}
\item removing the vertex $j$ from both partitions, and
\item adding the vertex $t+1$, connected by the edge $(t+1, t+1)$.
\end{itemize}
Additionally:
\begin{itemize}
\item $\Delta \mathrm{OBJ}_\A = \Delta \mathrm{OBJ}_\Q = \Delta |E| = 0$,
\item $\Delta \Phi \leq 0$, because $j \in \I_\A(t)$ could have been connected by an edge only to $j \in \I_\Q(t)$ (as previously proven).
\end{itemize}
\item The page $\sigma(t+1)$ is present only in the cache of algorithm $\B$. In this case:
\begin{itemize}
\item $\I_\A(t+1) = \I_\A(t) \cup \{t+1\} \setminus \{a\}$,
\item $\I_\Q(t+1) = \I_\Q(t) \cup \{t+1\} \setminus \{j\}$,
\end{itemize} where $\sigma(a)$ is the page evicted by algorithm $\A$, i.e. $a = p_\A(t+1)$, and $\sigma(j)$ is the previous request for page $\sigma(t+1)$. The graph $X(t+1)$ is derived from $X(t)$ by:
\begin{itemize}
\item removing the vertices $a \in \I_\A(t)$ and $j \in \I_\Q(t)$, and
\item adding the vertex $t+1$, connected by the edge $(t+1, t+1)$.
\end{itemize} In this case:
\begin{itemize}
\item $\Delta \mathrm{OBJ}_\Q = \Delta |E| = 0$,
\item $\Delta \mathrm{OBJ}_\A = 1$,
\item $\Delta \Phi \leq 1$.
\end{itemize}

\item The page $\sigma(t+1)$ is present only in the cache of algorithm $\A$. In this case:
\begin{itemize}
\item $\I_\A(t+1) = \I_\A(t) \cup \{t+1\} \setminus \{j\}$,
\item $\I_\Q(t+1) = \I_\Q(t) \cup \{t+1\} \setminus \{b\}$,
\end{itemize} where $\sigma(b)$ is the page evicted by algorithm $\Q$, i.e. $b = p_\Q(t+1)$, and $\sigma(j)$ is the previous request for page $\sigma(t+1)$. As mentioned earlier, the vertex $j \in \I_\A(t)$ could have been connected by an edge only to $j$, and since $j \notin \I_\Q(t)$, it follows that the vertex $j \in \I_\A(t)$ is an isolated vertex in the graph $X(t)$. Furthermore, since the edges of the graph $X(t)$ do not intersect and the partitions are of equal size, the existence of an isolated vertex among the vertices of $\I_\A(t)$ implies the presence of an isolated vertex among the vertices of $\I_\Q(t)$.

We construct the graph $X(t+1)$ from $X(t)$ as follows:
\begin{itemize}
\item remove the vertices $j \in \I_\A(t)$ and $b \in \I_\Q(t)$, and
\item add the vertex $t+1$, connected by the edge $(t+1, t+1)$.
\end{itemize}
Notice that:
\begin{itemize}
\item $\Delta \mathrm{OBJ}_\A = 0$,
\item $\Delta \mathrm{OBJ}_\Q = 1$.
\end{itemize}
Now, let us consider the possible cases.
\begin{enumerate}
\item The vertex $b \in \I_\Q(t)$ was isolated in the graph $X(t)$. In this case, $\Delta \Phi = -1$, and inequality \eqref{Lemma1MainIneq} remains satisfied, since $\Delta |E| = 0$.
\item The vertex $b \in \I_\Q(t)$ was connected by an edge to some vertex $a \in \I_\A(t)$ in the graph $X(t)$, and among the isolated vertices in $\I_\Q(t)$, there was a vertex $b'$ such that $\nu(b) \geq \nu(b')$. Note that $a \neq j$, since the vertex $j$ is isolated, as previously mentioned. In this case, in the constructed graph $X(t+1)$, we can additionally add an edge ${a, b'}$, using the fact that $\nu(a) \geq \nu(b) \geq \nu(b')$, and the vertex $a \in \I_\A(t+1)$ becomes isolated because the vertex $b \in \I_\Q(t)$ was removed. After this, $\Delta \Phi = -1$, $\Delta |E| = 0$, and inequality \eqref{Lemma1MainIneq} is again satisfied.
\item \label{3c} The vertex $b \in \I_\Q(t)$ was connected by an edge to the vertex $a \in \I_\A(t)$ in the graph $X(t)$, and for any isolated vertex $\tilde{b} \in \I_\Q(t)$, it holds that $\nu(b) < \nu(\tilde{b})$. As we know, at least one such vertex $\tilde{b}$ exists, and therefore algorithm $\Q$ made a mistake by removing page $\sigma(b)$. In this case, we add the edge $(\tilde{b}, b)$ in the graph $H_\Q(t+1)$. After this, $\Delta \Phi = 0$, $\Delta |E| = 1$, and inequality \eqref{Lemma1MainIneq} is satisfied.
\end{enumerate}
\item The page $\sigma(t+1)$ is absent in the caches of both algorithms. In this case:
\begin{itemize}
\item $\I_\A(t+1) = \I_\A(t) \cup \{t+1\} \setminus \{a\}$,
\item $\I_\Q(t+1) = \I_\Q(t) \cup \{t+1\} \setminus \{b\}$,
\end{itemize}
where $\sigma(a)$ and $\sigma(b)$ are the pages evicted by algorithms $\A$ and $\Q$, respectively, i.e. $a = p_\A(t+1)$ and $b = p_\B(t+1)$. We construct the graph $X(t+1)$ from the graph $X(t)$ as follows:
\begin{itemize}
\item remove the vertices $a \in \I_\A(t)$ and $b \in \I_\Q(t)$,
\item add the vertex $t+1$ and connect it by the edge $(t+1, t+1)$.
\end{itemize}
Notice that $\Delta \mathrm{OBJ}_\A = \Delta \mathrm{OBJ}_\Q = 1$. Now, let us consider the possible cases.
\begin{enumerate}
\item At least one of the vertices $a \in \I_\A(t)$ or $b \in \I_\Q(t)$ was isolated in the graph $X(t)$, or the vertices $a$ and $b$ were connected by an edge. In this case, $\Delta \Phi \leq 0$, $\Delta |E| = 0$, and inequality \eqref{Lemma1MainIneq} is satisfied.
\item In the graph $X(t)$, there were edges $(a, b')$ and $(a', b)$, with $a \neq a'$ and $b \neq b'$. Notice that in the constructed graph $X(t+1)$, the vertex $b' \in \I_\Q(t+1)$ remains isolated, so there is at least one isolated vertex in $\I_\Q(t+1)$. If among the isolated vertices in the partition $\I_\Q(t+1)$, there is a vertex $\tilde{b}$ such that $\nu(b) \geq \nu(\tilde{b})$, we can additionally add the edge $(a', \tilde{b})$ in the graph $X(t+1)$. Indeed, $\nu(a') \geq \nu(b)$, since the edge $(a', b)$ was present in the graph $X(t)$, which, by the induction hypothesis, satisfies the invariants. Hence, $\nu(a') \geq \nu(\tilde{b})$. After this, $\Delta \Phi = 0$, $\Delta |E| = 0$, and inequality \eqref{Lemma1MainIneq} is satisfied.
\item \label{4c} If, for any isolated vertex $\tilde{b} \in \I_\Q(t+1)$, it holds that $\nu(b) < \nu(\tilde{b})$ (and such vertices, as we know, do exist), this means that algorithm $\Q$ made a mistake by removing page $\sigma(b)$. In this case, we choose any isolated vertex $\tilde{b} \in \I_\Q(t+1)$ and add the edge $(\tilde{b}, b)$ in the graph $H_\Q(t+1)$. In this case, $\Delta |E| = 1$, and inequality \eqref{Lemma1MainIneq} is satisfied, since $\Delta \Phi = 1$.
\end{enumerate}
\end{enumerate}
\end{proof}

\begin{corollary}\label{Cor_1}
    Let $\B$ be the \textsc{BlindOracle} algorithm. Then
    $$
    \mathrm{OBJ}_\B \leq \mathrm{OPT} + \eta.
    $$
\end{corollary}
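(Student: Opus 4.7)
The plan is to apply Lemma~\ref{Lemma_ErrorGraphExistence} to $\Q = \B$, which reduces the statement to producing an eviction graph $H_\B$ with $|E| \leq \eta$. For every edge $(i,j) \in E$, Lemma~\ref{Lemma_LeftRightError} gives
\[
\nu(i) - \nu(j) \leq \eta^-(i) + \eta^+(j),
\]
and since $\nu(i) > \nu(j)$ are integer time indices, the left-hand side is at least $1$, so each edge is already "paid for" by at least one unit of prediction loss. The only remaining task is to distribute those units injectively.

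To do so I would partition $E = E_A \sqcup E_B$ with $E_A := \{(i,j) \in E \mid \nu(i) - \nu(j) \leq \eta^-(i)\}$ and $E_B := E \setminus E_A$, and bound the two pieces separately. On $E_B$, the defining inequality together with the bound above forces $\eta^+(j) \geq \nu(i) - \nu(j) - \eta^-(i) \geq 1$; by Lemma~\ref{Lemma_ErrorGraphForest} in-degrees are at most one, so the target indices $j$ of $E_B$-edges are pairwise distinct, and each contributes $\eta^+(j) \geq 1$, giving $|E_B| \leq |\{j : \eta^+(j) \geq 1\}| \leq \eta^+$. On $E_A$, I fix a source $i$: every out-edge $(i,j) \in E_A$ satisfies $\nu(j) \in [\nu(i) - \eta^-(i),\, \nu(i) - 1]$, an integer window of length $\eta^-(i)$, and by Lemma~\ref{Lemma_EdgeProperties} the values $\nu(j)$ over the out-neighborhood of $i$ are pairwise distinct, so at most $\eta^-(i)$ edges of $E_A$ leave $i$. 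Summing over $i$ yields $|E_A| \leq \sum_i \eta^-(i) = \eta^-$, and hence $|E| \leq \eta^+ + \eta^- = \eta$.

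The main technical point is the twofold use of integrality. First, $\nu(i) - \nu(j) \geq 1$ promotes the left/right decomposition of Lemma~\ref{Lemma_LeftRightError} into an actually-available integer unit of $\eta^\pm$ to charge against; second, the pairwise distinctness of $\nu(j)$ across a common source's out-neighborhood lets me count $E_A$-edges by counting integers in a window of length $\eta^-(i)$, which is exactly what prevents multiple edges from double-booking the same unit of $\eta^-(i)$. Together these ingredients shave off the factor of $2$ in Wei's earlier bound $\mathrm{OBJ}_\B \leq \mathrm{OPT} + 2\eta$; the naive charging using the coarser $\eta(i) + \eta(j)$ in place of $\eta^-(i) + \eta^+(j)$ would double-book the left and right halves of the loss and is what this argument is designed to avoid.
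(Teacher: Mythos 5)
Your overall plan coincides with the paper's: invoke Lemma~\ref{Lemma_ErrorGraphExistence} and certify $|E|\leq\eta$ by charging $\eta^-$ at edge sources (via the distinctness of the $\nu(j)$'s from Lemma~\ref{Lemma_EdgeProperties}) and $\eta^+$ at edge targets (via the in-degree bound from Lemma~\ref{Lemma_ErrorGraphForest}). However, your $E_B$ step has a genuine gap: for $(i,j)\in E_B$, i.e.\ $\nu(i)-\nu(j)>\eta^-(i)$, Lemma~\ref{Lemma_LeftRightError} only yields $\eta^+(j)\geq \nu(i)-\nu(j)-\eta^-(i)>0$, not $\eta^+(j)\geq 1$. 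Promoting this to a full unit requires $\eta^-(i)$ (equivalently $\omega(i)$) to be an integer, and the paper never assumes integer-valued predictions --- only the request times $\nu$ are integers. Concretely, take $\nu(i)-\nu(j)=1$, $\omega(i)=\nu(i)-0.9$ and $\omega(j)=\nu(j)+0.2$: then $\omega(j)>\omega(i)$, so \textsc{BlindOracle} does prefer to evict $\sigma(j)$, the edge lands in $E_B$ since $1>\eta^-(i)=0.9$, yet $\eta^+(j)=0.2<1$. So the inequality $|E_B|\leq\eta^+$ is not justified, and with many disjoint mistakes of this shape your per-edge charging under-collects badly; as written the argument only covers the integer-prediction case.

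The paper sidesteps this by aggregating per source rather than splitting per edge: fix $i$, order its out-neighbors so that $\nu(j_1)<\dots<\nu(j_{d_i})<\nu(i)$; distinctness gives $\nu(i)-\nu(j_1)\geq d_i$, and a \emph{single} application of Lemma~\ref{Lemma_LeftRightError} to the pair $(i,j_1)$ yields $\eta^-(i)+\eta^+(j_1)\geq d_i$, after which $s(i):=\eta^-(i)+\sum_r\eta^+(j_r)\geq d_i$ is summed over $i$ and the forest property prevents any loss term from being counted twice. This uses only the integrality of $\nu$, so it handles real-valued predictions, which is exactly where your $E_A/E_B$ decomposition breaks. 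Your argument becomes a correct (and slightly different) route if you either explicitly assume integer predictions or replace the threshold split by this per-source aggregation; the $E_A$ window-counting part of your proof is fine as is.
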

\begin{proof}
Let $H_\B$ be the eviction graph from Lemma~\ref{Lemma_ErrorGraphExistence}. Consider any vertex $i$ in $H_\B$ with outgoing edges to $j_1, \ldots, j_{d_i}$, where $\nu(j_1) < \nu(j_2) < \cdots < \nu(j_{d_i}) < \nu(i)$ (see Lemma~\ref{Lemma_EdgeProperties} for details). Consequently, $\nu(i) - \nu(j_1) \geq d_i$. Applying Lemma~\ref{Lemma_LeftRightError}, we obtain
\[
\eta^-(i) + \eta^+(j_1) \geq \nu(i) - \nu(j) \geq d_i.
\]
Thus
\[
s(i) := \eta^-(i) + \sum_{r=1}^{d_i} \eta^+(j_r) \geq d_i.
\]
Summing over all vertices $i$ in $H_\B$ gives us
\[
\sum_{i=1}^T s(i) \geq \sum_{i=1}^T d_i = |E|,
\]
where $|E|$ is the total number of edges in $H_\B$. On the other hand, since $H_\B$ is an oriented forest (Lemma~\ref{Lemma_ErrorGraphForest}):
\[
\sum_{i=1}^T s(i) \leq \sum_{i=1}^T \left(\eta^-(i) + \eta^+(i)\right) = \sum_{i=1}^T \eta(i) = \eta.
\]
Therefore, $|E| \leq \eta$. Applying Lemma~\ref{Lemma_ErrorGraphExistence}, we conclude:
\[
\text{OBJ}_\B \leq \text{OPT} + |E| \leq \text{OPT} + \eta.
\]
\end{proof}

Corollary \ref{Cor_1} enables us to directly deduce that the competitive ratio of the \textsc{BlindOracle} algorithm is bounded above by $1 + \frac{\eta}{\mathrm{OPT}}$.

As we mentioned before, there might be many feasible eviction graphs. Moreover, even graph obtained in the proof of Lemma \ref{Lemma_ErrorGraphExistence} is defined ambiguously, since it leaves some freedom to choose the vertices from which the edges are drawn. Naturally, the question arises whether there are graphs satisfying any special properties. We will give an example of such a graph in the following lemma.

\begin{lemma} \label{Lemma_ColoredErrorGraphExistence1}
    For any algorithm $\mathcal{Q}$, there exists an eviction graph $H_\mathcal{Q}$ with colored edges that satisfies the following properties:
    \begin{itemize}
        \item If two edges $(i, j)$ and $(i', j')$ have the same color, then $i = i'$.
        \item If two edges $(i, j)$ and $(i', j')$ have the same color, then the evictions of pages $\sigma(j)$ and $\sigma(j')$ do not overlap.
        \item The total number of colors used is at most $\mathrm{OPT}$.
    \end{itemize}
    Furthermore, the following inequality holds:
    \[
    \mathrm{OBJ}_\mathcal{Q} \leq \mathrm{OPT} + |E|,
    \]
    where $E$ is the set of edges in the graph $H_\mathcal{Q}$.
\end{lemma}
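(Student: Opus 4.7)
My plan is to enhance the inductive construction from the proof of Lemma~\ref{Lemma_ErrorGraphExistence}, preserving the bipartite matching $X(t)$ and the four-case analysis while adding a coloring rule for the $H_\Q$-edges as they are inserted. I treat colors as labels of \emph{isolation instances} on the $\mathcal{I}_\Q$ side of $X(t)$: whenever a vertex $v$ transitions into the isolated state in $\mathcal{I}_\Q$, a fresh color $c(v)$ is opened and attached to $v$. Every edge $(\tilde b, b)$ added to $H_\Q$ in Case~3c or~4c is tagged with the current color of the isolated source $\tilde b$ chosen for it. When $v$ later leaves $\mathcal{I}_\Q$ or becomes re-matched in Cases~3b/4b, its color is retired; any subsequent re-isolation of the same index $v$ is treated as a fresh instance and receives a new color.

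The first property (same color implies same source) is immediate from this definition. For the third property (total colors at most $\mathrm{OPT}$), I argue that a fresh color is opened only when a new isolation event occurs on the $\mathcal{I}_\Q$ side, and such an event requires some matched vertex of $\mathcal{I}_\A(t)$ to be removed from $\mathcal{I}_\A$. Inspecting the four cases of Lemma~\ref{Lemma_ErrorGraphExistence}, this happens only in Cases~2 and~4, both of which coincide with a Belady cache miss. Therefore the total number of isolation instances, and hence the total number of colors used, is at most $\mathrm{OBJ}_\A = \mathrm{OPT}$. The concluding inequality $\mathrm{OBJ}_\Q \leq \mathrm{OPT} + |E|$ is inherited verbatim from Lemma~\ref{Lemma_ErrorGraphExistence}, since the coloring does not alter the construction of $H_\Q$ itself.

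The second property (non-overlap) is the main obstacle. If the same isolated $\tilde b$ is selected as source at two times $t_1 + 1 < t_2 + 1$ within one isolation instance, producing same-color edges with targets $b_1$ and $b_2$, non-overlap demands $\nu(b_1) \leq t_2 + 1$, i.e., $\sigma(b_1)$ must be re-requested before the second edge is added. To enforce this I refine the tie-breaking rule in Cases~3c and~4c: among the candidate isolated vertices $\tilde b \in \mathcal{I}_\Q(t+1)$ with $\nu(\tilde b) > \nu(b)$, I prefer those whose most recently tagged outgoing edge (if any) is already resolved at time $t+1$. The key technical task, and the step I expect to consume the bulk of the proof, is to show that at least one such ``clean'' candidate is always available whenever Case~3c or~4c fires. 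I plan to attack it with a potential-style invariant that tracks unresolved tagged edges alongside $X(t)$, arguing that the number of ``dirty'' isolated candidates at time $t$ is strictly smaller than $\Phi(t)$, so that a clean choice exists whenever the construction requires adding a new edge.
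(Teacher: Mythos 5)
Your coloring bookkeeping for properties 1 and 3 is sound: each new isolation event on the $\I_\Q$-side of $X(t)$ is indeed caused by a Belady eviction removing a matched $\I_\A$-vertex (cases 2 and 4), so the number of isolation instances, hence of your colors, is at most $\mathrm{OPT}$, and the inequality $\mathrm{OBJ}_\Q \leq \mathrm{OPT} + |E|$ carries over unchanged. The gap is exactly where you expect it, and it is not just an unfinished computation: the invariant you hope to prove --- that the number of ``dirty'' isolated candidates is strictly smaller than $\Phi(t)$, so a ``clean'' source always exists --- appears to be false. A source $\tilde b$ stays dirty until its mistakenly evicted target is re-requested, and the only quantity you can tie to $\Phi(t)$ is the number of sources whose ``anchor'' (the $\I_\A$-vertex, call it $h(b')$, that became isolated in $X$ at the moment the edge was drawn) is still alive: these anchors are distinct isolated $\I_\A$-vertices, so there are fewer than $\Phi(t)$ of them. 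But once Belady evicts $\sigma(h(b'))$ --- which it plausibly does, since $\nu(h(b')) \geq \nu(b')$ is large --- the source is still dirty yet no longer accounted for by any $\Phi$-type bound. Since the lemma quantifies over \emph{arbitrary} $\Q$, an adversarial $\Q$ can manufacture a case~\ref{3c} event (where, unlike case~\ref{4c}, no freshly isolated and therefore clean $\I_\Q$-vertex is created) at a moment when every isolated vertex is dirty; your rule then has no admissible choice, and picking any dirty vertex puts two overlapping evictions under one color.

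The paper's proof accepts exactly this failure mode instead of excluding it: it maintains the auxiliary reservation graph $\widetilde{X}$ with edges $(h(b), \tilde b)$, always picks an \emph{unreserved} isolated vertex (which exists because $h(b)$ is isolated in both $X$ and $\widetilde{X}$, giving $m < \Phi$), and when a re-picked source produces an overlapping eviction it simply opens a \emph{new} color, charging it injectively to the Belady eviction of $\sigma(h(b'))$ that destroyed the reservation; first-edge colors are charged to the Belady eviction of $\sigma(\tilde b)$ itself. That injection, not a clean-candidate guarantee, is what keeps the color count at most $\mathrm{OPT}$. Your accounting identifies colors with isolation instances and therefore has no budget for these forced color splits; to repair the argument you would need either a proof of the clean-candidate claim (which the above suggests cannot hold in general) or an additional charging scheme of the paper's kind for the overlap-induced new colors.
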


\begin{proof}
    The construction of the graph $H_\Q$ follows the same procedure as in the proof of Lemma~\ref{Lemma_ErrorGraphExistence}, with one key difference: in steps \ref{3c} and \ref{4c}, instead of choosing an arbitrary isolated vertex $\tilde{b} \in \I_\Q(t)$ to draw an edge $(\tilde{b}, b)$, we will specify the selection more precisely. We will also describe how to assign colors to the edges to satisfy the required properties. Obviously, these modifications do not affect the inequality
    \[
    \mathrm{OBJ}_\mathcal{Q} \leq \mathrm{OPT} + |E|.
    \]

    Let us examine steps \ref{3c} and \ref{4c} from the proof of Lemma~\ref{Lemma_ErrorGraphExistence} more closely, as these are the only steps where an edge is added to $H_\Q$. In both cases, the vertex $b \in \I_Q(t)$, corresponding to the page $\sigma(b)$ evicted by algorithm $\Q$, is not an isolated vertex in the graph $X(t)$. It has a neighbor $a \in \I_\A(t)$, and this vertex $a$ becomes isolated in $X(t+1)$ after the removal of $b$. We denote this vertex by $h(b)$; it is the last neighbor of $b$ before its removal. Note that $h(b)$ is always defined for any vertex $b$ that has an incoming edge in $H_\Q$.

    We now introduce the concept of \emph{reservations}. The idea is simple: instead of selecting any arbitrary isolated vertex $\tilde{b}$ in steps \ref{3c} and \ref{4c}, we select an arbitrary isolated and unreserved vertex. To do this, we must define how vertices are reserved, how reservations are broken, and show that there is always at least one isolated and unreserved vertex available.

    The reservation procedure is as follows: whenever a vertex $\tilde{b}$ is chosen and an edge $(\tilde{b}, b)$ is added to $H_\Q$, the vertex $\tilde{b}$ is immediately marked as reserved. This is the only way a new reservation is created.

    To manage reservations, we introduce an auxiliary bipartite graph $\widetilde{X}(t)$, whose partitions correspond to those of $X(t)$. In this graph, a vertex $c \in \I_\Q(t)$ is considered reserved if and only if it is not isolated in $\widetilde{X}(t)$; thus, edges in $\widetilde{X}(t)$ represent reservations.

    We now describe how $\widetilde{X}(t+1)$ is constructed from $\widetilde{X}(t)$. First, the vertex set is updated to reflect changes in $\mathcal{I}_\mathcal{A}$ and $\mathcal{I}_\mathcal{Q}$. If a vertex is removed, all incident edges are also removed, which is the only way a reservation can be broken (since we do not delete edges for any other reason).

    Suppose now that $\widetilde{X}(t+1)$ is obtained from $\widetilde{X}(t)$ by updating its vertex set and deleting redundant edges. We impose two invariants on $\widetilde{X}(t)$:
    \begin{itemize}
        \item Its edges are disjoint (no two edges share a vertex).
        \item If $a \in \I_\mathcal{A}(t)$ has an edge in $\widetilde{X}(t)$, then $a$ is isolated in $X(t)$.
    \end{itemize}

    Removing edges cannot violate these invariants. To see this, observe that the proof of Lemma~\ref{Lemma_ErrorGraphExistence} implies the following: once a vertex $a \in \I_\mathcal{A}(t)$ becomes isolated in $X(t)$, it remains isolated until it is removed from $\I_\mathcal{A}$. Thus, our goal is to ensure that adding a new edge in $\widetilde{X}(t+1)$ does not violate these invariants. By the induction hypothesis, $\widetilde{X}(t)$ satisfies all the required constraints. A straightforward verification shows that $\widetilde{X}(t+1)$ also satisfies these constraints after updating the vertex set.

    As previously mentioned, the vertex $h(b) \in \I_\A(t+1)$ is isolated in $X(t+1)$, and $h(b) \in \I_\A(t)$ was not isolated in $X(t)$. Since $\widetilde{X}(t)$ satisfies the invariants, $h(b)$ is also isolated in $\widetilde{X}(t)$. Therefore, after updating the vertex set, $h(b)$ remains isolated in $\widetilde{X}(t+1)$. Recall that $\Phi(t+1)$ is the number of isolated vertices in the partition $\I_\Q(t+1)$ of $X(t+1)$. The number of isolated vertices in $\I_\A(t+1)$ is also $\Phi(t+1)$. Let $m$ be the number of edges in $\widetilde{X}(t+1)$. Since edges in $\widetilde{X}(t+1)$ are disjoint and can only be incident to isolated vertices of $\I_\A(t+1)$ in $X(t+1)$, we have $m \leq \Phi(t+1)$. Moreover, since $h(b)$ is isolated in both $X(t+1)$ and $\widetilde{X}(t+1)$, $m < \Phi(t+1)$. Thus, there must exist a vertex $\tilde{b} \in \I_Q(t+1)$ that is isolated in both $X(t+1)$ and $\widetilde{X}(t+1)$ --- that is, an unreserved isolated vertex. We reserve this vertex $\tilde{b}$ by adding the edge $(h(b), \tilde{b})$ to $\widetilde{X}(t+1)$, which does not violate any constraints.

    Next, we describe the coloring of edges in $H_\mathcal{Q}$. If $(\tilde{b}, b)$ is the first edge originating from $\tilde{b}$, we assign it a new color. Otherwise, we consider the edge $(\tilde{b}, b')$ --- the last edge previously added from $\tilde{b}$ --- and attempt to assign $(\tilde{b}, b)$ the same color. If this is not possible (i.e., if the evictions of $\sigma(b')$ and $\sigma(b)$ overlap), we assign a new color.

    To bound the number of colors, we construct an injection from the set of colors to the evictions performed by algorithm $\mathcal{A}$. A new color is introduced in two cases: when $(\tilde{b}, b)$ is the first outgoing edge from $\tilde{b}$, or when the evictions of $\sigma(b')$ and $\sigma(b)$ overlap, where $(\tilde{b}, b')$ is the previously added edge.

    First, consider the case where $(\tilde{b}, b)$ is the first outgoing edge. When this edge is drawn in $H_\mathcal{Q}$, the vertex $\tilde{b} \in \mathcal{I}_\mathcal{Q}$ is always isolated in $X$. From the proof of Lemma~\ref{Lemma_ErrorGraphExistence}, $X$ is constructed so that every new vertex $c$ is initially paired with an edge $(c, c)$. The edge $(\tilde{b}, \tilde{b})$ can only be removed if algorithm $\mathcal{A}$ evicts page $\sigma(\tilde{b})$. Thus, we map the color to this eviction.

    Next, consider the case where the evictions of $\sigma(b')$ and $\sigma(b)$ overlap, requiring a new color. Recall that $\sigma(b)$ is evicted at time $t + 1$. By definition of overlap, $\nu(b') > t + 1$. After adding edge $(\tilde{b}, b')$ to $H_\Q$, the vertex $\tilde{b}$ was reserved by adding the edge $(h(b'), \tilde{b})$ to $\widetilde{X}$. Since $\nu(h(b')) \geq \nu(b') > t + 1$ (by the constraints on $X$), the reservation was broken due to the removal of $h(b')$, which can only happen if algorithm $\mathcal{A}$ evicts page $\sigma(h(b'))$. We map the new color to this eviction.

    To ensure that this mapping is injective, we must check that no two colors correspond to the same eviction. The only nontrivial case is when one color corresponds to the first outgoing edge and the other to an overlap. However, in the first case, algorithm $\mathcal{A}$ evicts page $\sigma(\tilde{b})$, which is isolated in $\widetilde{X}$, while in the second case, the eviction breaks a reservation, i.e., algorithm $\mathcal{A}$ evicts page $\sigma(a')$ that has an edge in $\widetilde{X}$. Thus, the mapping is injective.
\end{proof}

\begin{corollary} \label{Cor_2}
    Let $\B$ denote the \textsc{BlindOracle} algorithm. Then, the following inequality holds:
    \[
    \mathrm{OBJ}_\B \leq 3 \cdot \mathrm{OPT} + \frac{3\eta}{k}.
    \]
\end{corollary}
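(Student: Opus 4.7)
The strategy is to apply Lemma~\ref{Lemma_ColoredErrorGraphExistence1} to the \textsc{BlindOracle} algorithm $\B$. This yields an eviction graph $H_\B$ with a coloring of its edges such that same-color edges share their origin, correspond to pairwise non-overlapping evictions, and the total number of colors $C$ satisfies $C \leq \mathrm{OPT}$. Combined with $\mathrm{OBJ}_\B \leq \mathrm{OPT} + |E|$ from that same lemma, the target $\mathrm{OBJ}_\B \leq 3\,\mathrm{OPT} + 3\eta/k$ reduces to proving $|E| \leq 2\,\mathrm{OPT} + 3\eta/k$.

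Writing $|E| = \sum_{c=1}^{C} d_c$ where $d_c$ is the size of the $c$-th color class, my plan is to bound $d_c \leq 2 + 3\eta_c/k$ for a disjoint allocation $\sum_c \eta_c \leq \eta$ of oracle losses. Fix a color class with edges $(i_c, j_{c,1}), \ldots, (i_c, j_{c,d_c})$ listed in chronological order of eviction time $t_{c,1}+1 < \cdots < t_{c,d_c}+1$. Because $\sigma(i_c)$ remains in $\B$'s cache until $\nu(i_c) > t_{c,d_c}$, Lemma~\ref{Lemma_LeftRightError} applies at each $r$ to give $\eta^-(i_c) + \eta^+(j_{c,r}) \geq \nu(i_c) - \nu(j_{c,r})$, while non-overlap gives $\nu(j_{c,r}) \leq t_{c,r+1}+1$ for $r < d_c$. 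The key step is to combine these with the cache-size constraint to obtain a lower bound of order $k(d_c - 2)$ on the per-color oracle-loss budget. The intuition is that at each eviction time $t_{c,r}+1$, the full \textsc{BlindOracle} comparison forces $\omega(j_{c,r})$ to exceed the predictions of \emph{all} $k$ pages present in the cache, so repeated non-overlapping mistakes anchored at $\sigma(i_c)$ require the predictions to be systematically off by amounts that aggregate to $\Omega(k)$ per extra edge beyond the first two.

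Summing the per-color bounds over $c \in [C]$, using the forest structure from Lemma~\ref{Lemma_ErrorGraphForest} to ensure each $\eta^+(j)$ is charged at most once, and using $C \leq \mathrm{OPT}$, gives $|E| \leq 2C + 3\eta/k \leq 2\,\mathrm{OPT} + 3\eta/k$, which yields the corollary. The hard part will be establishing the per-color factor-$k$ bound rigorously: one must carefully account for the contribution of $\eta^-(i_c)$, which must \emph{not} be charged $d_c$ times if the sum over colors is to stay within $\eta$, balanced against the $\eta^+(j_{c,r})$'s and possibly intermediate oracle errors from pages hit along the non-overlapping trigger chain between consecutive color-$c$ evictions. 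The constant $3$ in the final bound should emerge from slack in this accounting and in the handling of the first two edges of each color chain.
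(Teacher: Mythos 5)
Your setup matches the paper's: invoke Lemma~\ref{Lemma_ColoredErrorGraphExistence1} to get the colored eviction graph with at most $\mathrm{OPT}$ colors and $\mathrm{OBJ}_\B \leq \mathrm{OPT} + |E|$, then reduce everything to showing $|E| \leq 2\cdot\mathrm{OPT} + 3\eta/k$ via a per-color accounting that charges $\Omega(k)$ of ``loss budget'' per edge beyond a constant number. The genuine gap is exactly the step you flag as ``the hard part'': the quantities you propose to charge --- $\eta^-(i_c)$ and the $\eta^+(j_{c,r})$'s, combined with non-overlap --- only control $\nu(i_c) - \nu(j_{c,1})$, which by the chaining of eviction times is at least the sum of the gap lengths $L^r_i = \nu(j_{c,r}) - t_{c,r}$. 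But each gap can be as small as $1$, so this route by itself only recovers $|E| \leq \eta$ (i.e., Corollary~\ref{Cor_1}), with no factor of $k$. Your intuition that ``$\omega(j_{c,r})$ exceeds the predictions of all $k$ cached pages'' does not convert into an aggregate loss of order $k$ per edge, because most of those cached pages may legitimately have earlier next requests (no error incurred), and for the ones that do not, the discrepancy could be attributable entirely to a single overestimated $\omega(j_{c,r})$, so you cannot charge $k - L$ units of loss to distinct predictions without double counting.

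The paper closes this gap with a device your proposal does not contain: \emph{inversions}. Define $M(y)$ as the number of pages $y'$ with $\nu(y') > \nu(y)$ but $\omega(y') \leq \omega(y)$. The key per-edge inequality is $L^r_i + M(y^r_i) \geq k$: if the gap $L^r_i$ is short, at most $L^r_i - 1$ cached pages can be re-requested inside it, so at least $k - L^r_i$ pages that were in the cache when $\sigma(y^r_i)$ was evicted have strictly larger $\nu$ yet (by the \textsc{BlindOracle} rule) no larger $\omega$, i.e.\ they are inversions of $y^r_i$. Adding these inversion counts to the Lemma~\ref{Lemma_LeftRightError} bound yields $\eta + M \geq k(|E| - 2n)$ with $n \leq \mathrm{OPT}$ colors, and the second missing ingredient is the imported global bound $M \leq 2\eta$ (from \cite{PredictiveMarkerImprovement1}), which is precisely where the constant $3$ in $3\eta/k$ comes from ($\eta + M \leq 3\eta$). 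Without introducing inversions (or an equivalent mechanism that spreads the charge over the other cached pages and then bounds the total), the per-color factor-$k$ bound you need does not follow from the quantities in your outline, so the proposal as stated is incomplete at its central step.
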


\begin{proof}
    Consider the eviction graph $H_\B$ from Lemma~\ref{Lemma_ColoredErrorGraphExistence1}. Focus on a vertex $x$ with outgoing edges grouped by colors: edges $(x, y^1_1), \ldots, (x, y^1_{d_1})$ of the first color, $(x, y^2_1), \ldots, (x, y^2_{d_2})$ of the second color, and so on. For each color $r$, we order the edges so that $\nu(y^r_1) < \nu(y^r_2) < \ldots < \nu(y^r_{d_r})$. Let $t^r_i$ be the time when page $\sigma(y^r_i)$ was evicted. Since the evictions do not overlap, we have the following chain of inequalities:
    \begin{equation} \label{Cor_2_ineq_1}
        t^r_1 < \nu(y^r_1) \leq t^r_2 < \nu(y^r_2) \leq \ldots \leq t^r_{d_r} < \nu(y^r_{d_r}).
    \end{equation}
    Let $n_x$ be the total number of colors outgoing from $x$. Without loss of generality, assume that the colors are ordered according to the order in which they were added during the construction of $H_\B$ (as described in the proof of Lemma~\ref{Lemma_ColoredErrorGraphExistence1}). Thus,
    \begin{equation} \label{Cor_2_ineq_2}
    t^1_1 < \ldots < t^1_{d_1} < t^2_1 < \ldots < t^2_{d_2} < \ldots < t^{n_x}_1 < \ldots < t^{n_x}_{d_{n_x}}.
    \end{equation}
    Combining inequalities \eqref{Cor_2_ineq_1} and \eqref{Cor_2_ineq_2}, and using the fact that $\nu(x) > \nu(y^{n_x}_{d_{n_x}})$, we obtain:
    \begin{equation} \label{Cor_2_ineq_3}
        \nu(x) - \nu(y^1_1) \geq \sum_{r=1}^{n_x} \sum_{i=2}^{d_r-1}  L^r_i,
    \end{equation}
    where $L^r_i$ is the distance between the eviction of page $\sigma(y^r_i)$ and its next request, i.e., $L^r_i := \nu(y^r_i) - t^r_i$.

    Applying Lemma~\ref{Lemma_LeftRightError} to inequality \eqref{Cor_2_ineq_3}, we get
    \begin{equation} \label{Cor_2_ineq_4}
        \eta^-(x) + \eta^+(y^1_1) \geq \sum_{r=1}^{n_x} \sum_{i=2}^{d_r-1} L_i^r.
    \end{equation}

    For any $y=1,\ldots,T$, let $M(y)$ denote the number of pages $y'$ such that $\nu(y') > \nu(y)$ but $\omega(y') \leq \omega(y)$; these are called \textit{inversions}. Adding $\sum_{r=1}^{n_x} \sum_{i=2}^{d_r-1} M(y^r_i)$ to both sides of inequality \eqref{Cor_2_ineq_4}, we obtain
    \begin{equation} \label{Cor_2_ineq_5}
        \eta^-(x) + \eta^+(y^1_1) + \sum_{r=1}^{n_x} \sum_{i=2}^{d_r-1} M(y^r_i) \geq \sum_{r=1}^{n_x} \sum_{i=2}^{d_r-1} (L_i^r + M(y_i^r)).
    \end{equation}
    The reason for this is as follows: for every $y_i^r$, the following inequality holds:
    \begin{equation} \label{Cor_2_ineq_6}
        L_i^r + M(y_i^r) \geq k.
    \end{equation}
    Indeed, if $L_i^r \geq k$, then \eqref{Cor_2_ineq_6} is obvious. If $L_i^r < k$, note that there are at most $L^r_i - 1$ pages whose next request occurs in the interval $(t_i^r, \nu(y_i^r))$. Therefore, at least $k - L_i^r$ pages that were in the cache together with $\sigma(y_i^r)$ at the time of its eviction have $\nu$-values strictly greater than $\nu(y_i^r)$. Since \textsc{BlindOracle} chose to evict $\sigma(y_i^r)$ instead of these pages, their $\omega$-values do not exceed $\omega(y_i^r)$, so $M(y_i^r) \geq k - L_i^r$. Thus, inequality \eqref{Cor_2_ineq_6} holds.

    Substituting \eqref{Cor_2_ineq_6} into \eqref{Cor_2_ineq_5}, we get
    \[
        \eta^-(x) + \eta^+(y^1_1) + \sum_{r=1}^{n_x} \sum_{i=2}^{d_r-1} M(y^r_i) \geq k(|E_x| - 2),
    \]
    where $E_x$ denotes the set of edges outgoing from vertex $x$.
    
    Summing over all vertices $x$ and using the fact that $H_\B$ is an oriented forest (by Lemma~\ref{Lemma_ErrorGraphForest}), we obtain
    \[
    \eta + M \geq k (|E| - 2n),
    \]
    where $E$ is the set of edges in $H_\B$, $n$ is the total number of colors, and $M$ is the total number of inversions. It is known from~\cite{PredictiveMarkerImprovement1} that $M \leq 2\eta$. Also, from Lemma~\ref{Lemma_ColoredErrorGraphExistence1}, we know that $n \leq \mathrm{OPT}$, so
    \[
    |E| \leq \frac{3\eta}{k} + 2 \cdot \mathrm{OPT}.
    \]

    Again, from Lemma~\ref{Lemma_ColoredErrorGraphExistence1}, we have
    \[
    \mathrm{OBJ}_\B \leq \mathrm{OPT} + |E|,
    \]
    and combining these two inequalities gives the desired result:
    \[
    \mathrm{OBJ}_\B \leq 3 \cdot \mathrm{OPT} + \frac{3\eta}{k}.
    \]
\end{proof}

Corollary \ref{Cor_2} directly implies that the competitive ratio of the \textsc{BlindOracle} algorithm is upper-bounded by $3 + \frac{3}{k}\frac{\eta}{\mathrm{OPT}}$. Combining this result with Corollary \ref{Cor_1}, we conclude that the competitive ratio is bounded by
$$
    \min\left\{1 + \frac{\eta}{\mathrm{OPT}}, 3 + \frac{3}{k}\frac{\eta}{\mathrm{OPT}}\right\}.
$$

As previously discussed, a chain can be considered a particular case of non-overlapping evictions. Therefore, we now impose stricter conditions on the eviction graph, requiring that all edges of the same color must form a chain:

\begin{lemma} \label{Lemma_ColoredErrorGraphExistence}
For any algorithm $\Q$, there exists an eviction graph $H_\Q$ with colored edges that satisfies the following properties:
\begin{itemize}
    \item If two edges $(i, j)$ and $(i', j')$ have the same color, then $i = i'$.
    \item All edges of the same color can be ordered as $(i, j_1), \ldots, (i, j_d)$ so that the sequence $(j_1, \ldots, j_d)$ forms a chain.
    \item The total number of colors is at most $3 \cdot \mathrm{OPT}$.
\end{itemize}
Moreover, the following inequality holds:
\[
\mathrm{OBJ}_\Q \leq 2 \cdot \mathrm{OPT} + |E|,
\]
where $E$ is the set of edges in the graph $H_\Q$.
\end{lemma}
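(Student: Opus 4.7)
My plan is to adapt the construction from Lemma~\ref{Lemma_ColoredErrorGraphExistence1}, keeping the eviction graph $H_\Q$ itself, the bipartite matching $X(t)$, and the reservation bipartite graph $\widetilde{X}(t)$ exactly as in that proof. Only the rule that governs when a freshly added edge $(\tilde{b}, b)$ at time $t+1$ inherits the color of the previous edge $(\tilde{b}, b')$ is modified: instead of reusing the color whenever the reservation $(h(b'), \tilde{b})$ is still present in $\widetilde{X}$ (which only guarantees non-overlap of the two evictions), I would require in addition that $\sigma(t+1)=\sigma(b')$, i.e., that the current miss is precisely the re-request of the page tracked at the last endpoint of the color. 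A straightforward induction on the edges of each color class then shows that its targets, listed in order of eviction, form a chain in the sense of Section~\ref{Sect:EvictionGraph}.

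Because the underlying graph $H_\Q$ is unchanged, the inequality $\mathrm{OBJ}_\Q \le \mathrm{OPT}+|E|$ proven in Lemma~\ref{Lemma_ColoredErrorGraphExistence1} is preserved verbatim, so in particular the weaker bound $\mathrm{OBJ}_\Q \le 2\,\mathrm{OPT}+|E|$ claimed here holds for free. The substantive work is to upper bound the total number of colors by $3\,\mathrm{OPT}$. I would classify every color opening into one of three disjoint types: (A) $(\tilde{b}, b)$ is the first edge out of $\tilde{b}$; (B) a previous edge $(\tilde{b}, b')$ exists and the reservation $(h(b'), \tilde{b})$ has already been broken in $\widetilde{X}$; (C) a previous edge exists, the reservation is still alive, but $\sigma(t+1)\ne \sigma(b')$, so extending the color would violate the chain property. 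Types (A) and (B) are precisely the two cases considered in Lemma~\ref{Lemma_ColoredErrorGraphExistence1}, and they arise at exactly the same moments as there, since the reservation dynamics are unchanged; together they contribute at most $\mathrm{OPT}$ colors via the joint injection into $\A$-evictions of $\sigma(\tilde{b})$ and $\sigma(h(b'))$.

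The main obstacle is bounding the number of type (C) chain-break openings by $2\,\mathrm{OPT}$. This case is genuinely new relative to Lemma~\ref{Lemma_ColoredErrorGraphExistence1}, so a fresh injective charge into $\A$-evictions is required, and it must be disjoint from the charges used for types (A) and (B). The plan is to exploit the fact that at a type (C) event the reservation $(h(b'), \tilde{b})$ is still alive, so $\sigma(h(b'))$ remains in $\A$'s cache and $\nu(h(b'))$ is large; by forwarding the reservation bookkeeping until $\sigma(b')$ is eventually re-requested, or until the reservation itself is broken, one should expose an $\A$-eviction whose page has not been used to charge any color of type (A), (B), or another type (C) event, at the cost of allowing each such $\A$-eviction to be charged with multiplicity at most two (hence the factor $2\,\mathrm{OPT}$). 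Designing these two charging maps so that they are injective and jointly disjoint from the (A) and (B) charges is the delicate step of the argument, and everything else---the chain property inside a color, the bound on $|E|$, and the bookkeeping of $X$ and $\widetilde{X}$---is inherited essentially unchanged from the proof of Lemma~\ref{Lemma_ColoredErrorGraphExistence1}.
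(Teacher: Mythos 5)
Your plan keeps the construction of $H_\Q$, $X$, and $\widetilde{X}$ exactly as in Lemma~\ref{Lemma_ColoredErrorGraphExistence1} and only tightens the color-reuse rule, but this misses the central difficulty, and the case analysis you build on it is internally inconsistent. In the unmodified construction the source of every new edge is chosen among vertices that are isolated \emph{and unreserved}; hence your type (C) (``previous edge exists, reservation still alive'') can never occur, because a still-reserved $\tilde{b}$ is simply ineligible to be selected again. The case that actually matters is the opposite one: when the chain continues, i.e.\ $\sigma(t+1)=\sigma(b')$, the reservation $(h(b'),\tilde{b})$ created at the previous link is typically still intact at time $t+1=\nu(b')$ (it is only destroyed by an eviction of $\sigma(h(b'))$ by $\A$ or removal of $\tilde{b}$), so under your rules the old source cannot be reused, a fresh vertex $x$ must be picked, and a new color is opened. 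These openings cannot be charged to evictions of $\A$ --- nothing changes in $\A$'s cache at such a step --- so on an input with a long chain and few $\A$-evictions the number of colors grows like $|E|$ rather than $O(\mathrm{OPT})$, and the third bullet fails. Your admission that the injections for the ``delicate step'' remain to be designed is therefore not a technicality: the quantity you would need to bound is not bounded by $O(\mathrm{OPT})$ under your construction.

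The paper resolves exactly this point by \emph{modifying} the reservation mechanism rather than the coloring alone: when the eviction of $\sigma(b)$ is triggered by that of $\sigma(b')$ and the edge $(h(b'),c)$ is still in $\widetilde{X}$, the reservation is \emph{extended} (the edge is replaced by $(h(b),c)$) and the same, still-reserved source $c$ is reused for the next link of the color. This creates a new problem your proposal never meets: after several links, $\sigma(c)$ need not be better than $\sigma(b)$, so the edge $(c,b)$ may be illegal for an eviction graph; the paper then skips the edge and counts the event in an auxiliary counter $\psi$, proves $\psi \le \mathrm{OPT}$, and this is precisely why the lemma only claims $\mathrm{OBJ}_\Q \le 2\,\mathrm{OPT}+|E|$ rather than the $\mathrm{OPT}+|E|$ you assert comes ``for free.'' New colors are then opened only when a chain starts (at most $\mathrm{OPT}$ times), when the triggering eviction created no reservation (at most $\mathrm{OPT}$ times, via $\mathrm{OBJ}_\Q\le\mathrm{OPT}+|E|+\psi$), or when an extended reservation was broken (at most $\mathrm{OPT}$ times, by an injection into $\A$-evictions), giving the $3\,\mathrm{OPT}$ bound. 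Without reservation extension and the $\psi$ bookkeeping, your outline does not yield the lemma.
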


\begin{proof}
The proof relies heavily on the proof of Lemma \ref{Lemma_ColoredErrorGraphExistence1}, with some modifications. As before, we focus on steps \ref{3c} and \ref{4c}, and we continue to use the auxiliary reservation graph $\widetilde{X}$, but with some differences described below. It remains true that there always exists a vertex that is isolated in both $X(t+1)$ and $\widetilde{X}(t+1)$; we denote this vertex by $x$.

Suppose the eviction of $\sigma(b)$ was triggered by the eviction of $\sigma(b')$, and there is an edge $(c, b')$ in $H_\Q$ and an edge $(h(b'), c)$ in $\widetilde{X}(t)$ (where $h(b')$ is defined as in Lemma~\ref{Lemma_ColoredErrorGraphExistence1}). In this case, we extend the reservation of $c$ by replacing the edge $(h(b'), c)$ with $(h(b), c)$ in $\widetilde{X}(t+1)$. In all other cases, we reserve the vertex $x$ by adding the edge $(h(b), x)$ to $\widetilde{X}(t+1)$.

Now, we describe how edges are added to $H_\Q$. If $x$ was chosen for reservation, we add an edge $(x, b)$ to $H_\Q$ with a new color. If the reservation of $c$ was extended, we add an edge $(c, b)$ to $H_\Q$ with the same color as $(c, b')$, unless $\sigma(c)$ is not better than $\sigma(b)$. In this latter case, we do not add an edge to $H_\Q$ (to avoid violating the definition of the eviction graph), but instead increment an auxiliary counter $\psi$, which counts all such situations where an edge should have been added but was not.

Obviously,
\[
\mathrm{OBJ}_\Q \leq \mathrm{OPT} + |E| + \psi,
\]
since there are exactly $\psi$ instances where we did not add an edge to $H_\Q$ even though we were supposed to.

Let us analyze all cases where a new color is used (i.e., when vertex $x$ was chosen for reservation):

\begin{itemize}
    \item The eviction of $\sigma(b)$ was not triggered by any other eviction. This means that the requested page $\sigma(t+1)$ is new, so any algorithm, including the optimal one, will incur a cache miss. The number of such cases is at most $\mathrm{OPT}$.
    \item The eviction of $\sigma(b)$ was triggered by the eviction of $\sigma(b')$, but the eviction of $\sigma(b')$ did not result in a reservation. This means that the eviction of $\sigma(b')$ did not correspond to cases \ref{3c} or \ref{4c}. We claim that this can happen at most $\mathrm{OPT}$ times. Indeed, these are cases where $\mathrm{OBJ}_\Q$ increases without an increase in $|E|$ or $\psi$. However, we know that $\mathrm{OBJ}_\Q \leq \mathrm{OPT} + |E| + \psi$ always holds.
    \item The eviction of $\sigma(b)$ was triggered by the eviction of $\sigma(b')$, and the vertex $c$ was reserved by $b'$. This means that the edge $(h(b'), c)$ was created in $\widetilde{X}$, but is now absent in $\widetilde{X}(t)$. We claim that the number of such edges is at most $\mathrm{OPT}$. The proof is similar to the argument for bounding the number of colors in Lemma~\ref{Lemma_ColoredErrorGraphExistence1}, so we only sketch it here. We construct an injection from such edges to evictions made by algorithm~$\mathcal{A}$. The edge $(h(b'), c)$ can only be removed if one of its endpoints is deleted. The vertex $h(b')$ can only be deleted if algorithm~$\mathcal{A}$ evicts the page $\sigma(h(b'))$, since $\nu(h(b')) \geq \nu(b') = t + 1$ (because the eviction of $\sigma(b')$ triggered the eviction of $\sigma(b)$ at time $t + 1$). Thus, we map this edge to that eviction. The removal of the other endpoint, $c$, can be mapped to the eviction of its twin neighbor, i.e., the vertex $c$ from the other partition whose eviction destroyed the initial edge $(c, c)$.

\end{itemize}

Summing up, we see that the total number of colors does not exceed $3 \cdot \mathrm{OPT}$. It remains to bound the counter $\psi$.

Each increment of $\psi$ is associated with the creation of an edge $(h(b), c)$ in $\widetilde{X}(t+1)$. This edge is either destroyed when one of its endpoints is removed, or it persists until the end. As shown above, the number of such edges is at most $\mathrm{OPT}$.

\end{proof}

\section{Alternating Strategies} \label{Sect:UpperBound}

One of the main advantages of using chains is that they establish a clear connection between the algorithm and its eviction graph, as formalized in Lemma~\ref{Lemma_ColoredErrorGraphExistence}. Importantly, an online algorithm cannot construct the eviction graph during its execution, since it only becomes aware of its mistakes after they occur. However, it can maintain chains, as it always knows which previous eviction led to the current cache miss. By ensuring that the algorithm follows a specific behavior along each chain, and utilizing the fact that edges of the same color form a chain, we can leverage this structure for further analysis.

In our approach, we specifically design the algorithm to alternate between different strategies along the chain. More precisely, when we say that algorithm $\mathcal{Q}$ alternates strategies $\mathcal{Q}_1, \ldots, \mathcal{Q}_s$, we mean the following: if the eviction of page $p_\mathcal{Q}(t)$ was performed using strategy $\mathcal{Q}_i$ and this eviction subsequently triggered the eviction of page $p_\mathcal{Q}(t')$, then $p_\mathcal{Q}(t')$ should be evicted using strategy $\mathcal{Q}_{i \oplus 1}$, where $i \oplus 1$ denotes $(i + 1 \bmod s) + 1$.

Our algorithm, as described in Section~\ref{Sec:Alg}, alternates between three strategies: \textsc{BlindOracle}, \textsc{RandomAlg}, and \textsc{Corrector}.

\begin{figure}[h]
    \centering
    \includegraphics[width=0.95\textwidth]{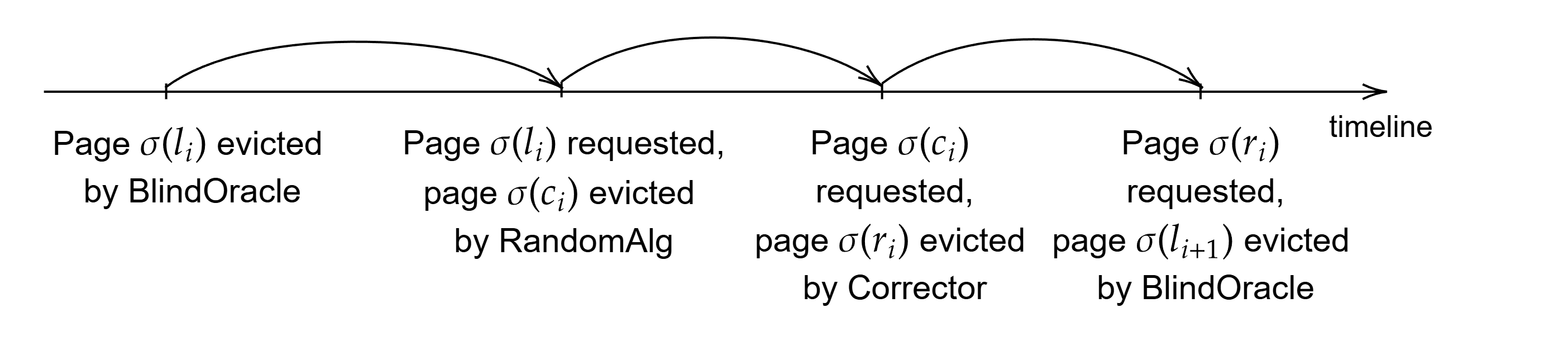}
    \caption{Triple $(l_i, c_i, r_i)$.}
    \label{fig:TripleExample}
\end{figure}

\begin{figure} [h]
    \centering
    \includegraphics[width=0.95\textwidth]{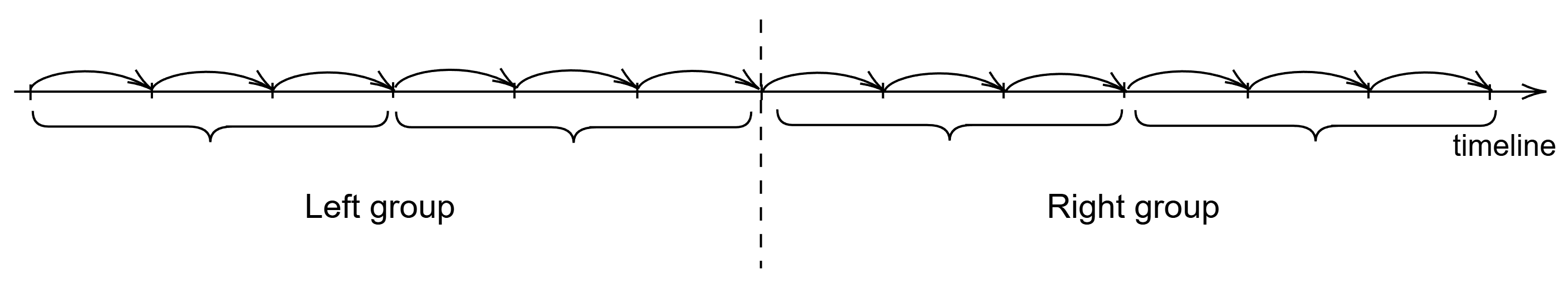}
    \caption{An example of a divisible chain, consisting of four triples.}
    \label{fig:GroupsExample}
\end{figure}

For our analysis, it is convenient to consider triples of consecutive links in the chain. The $i$-th triple corresponds to the tuple $(l_i, c_i, r_i)$, where page $\sigma(l_i)$ was evicted by \textsc{BlindOracle}, $\sigma(c_i)$ was chosen randomly, and $\sigma(r_i)$ was evicted by \textsc{Corrector} (see Fig.~\ref{fig:TripleExample}). We are particularly interested in $L_i$, the length of the intermediate link. Specifically, $L_i$ is defined as $t' - t$, where $t$ is the time step when $\sigma(c_i)$ was evicted, and $t'$ is the time step when $\sigma(r_i)$ was evicted. We refer to a triple as \textit{short} if $L_i \leq k/10$.

Consider an arbitrary chain $(b_1, \ldots, b_d)$. We call it \textit{divisible} if its links can be partitioned into an even number of triples. Note that the triples of a divisible chain are naturally divided into two groups, the left group and the right group, each containing an equal number of triples (see Fig.~\ref{fig:GroupsExample}).

\begin{lemma} \label{Lemma_LongChain}
    Let $\mathcal{B}$ be the \textsc{AlternatingOracle}, $H_\mathcal{B}$ the eviction graph from Lemma~\ref{Lemma_ColoredErrorGraphExistence}, and let the edges $(a, b_1), \ldots, (a, b_d)$ be of the same color, forming a divisible chain $(b_1, \ldots, b_d)$. Let $\psi$ denote the maximum number of short triples in either the left or right group of the chain. Then, the following inequality holds:
    $$
    \sum_{i=1}^d \eta(b_i) \geq \frac{(d/6 - \psi)^2 k}{20}.
    $$
\end{lemma}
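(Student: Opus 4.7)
Let $N := d/6 - \psi$, so that each of the left and right groups contains at least $N$ long triples (those with $L_i > k/10$). The target inequality is $\sum_{i=1}^d \eta(b_i) \geq N^2 k / 20$. The central tool will be Lemma~\ref{Lemma_LeftRightError} applied to the BlindOracle eviction of each $l_i$: since the edge $(a,l_i)$ certifies that $\sigma(a)$ was a strictly better candidate present in the cache, it yields
\[
\eta^-(a) + \eta^+(l_i) \;\geq\; \nu(a) - \nu(l_i).
\]

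The key geometric step is to show that for every long triple $i$ in the \emph{left} group, $\nu(a) - \nu(l_i) > Nk/10$. This is because the entire right group lies later in the chain than $l_i$, and its at least $N$ long triples contribute link lengths $L_j > k/10$ that are all included in the telescoping sum $\nu(a) - \nu(l_i) \geq \sum_{j > 3i-2} \ell_j \geq \sum_{j \text{ long in right}} L_j$. Summing the Lemma~\ref{Lemma_LeftRightError} inequality over all $N$ long triples in the left group gives
\[
N \cdot \eta^-(a) + \sum_{i \in L_{\text{left}}} \eta^+(l_i) \;>\; \frac{N^2 k}{10}.
\]
If $\eta^-(a) \leq Nk/20$, this immediately gives $\sum \eta^+(l_i) > N^2 k / 20$, and since each $\eta^+(l_i) \leq \eta(l_i) = \eta(b_{3i-2})$ is a summand of $\sum_j \eta(b_j)$, the conclusion follows.

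The remaining (and harder) case is $\eta^-(a) > Nk/20$, where one must recover the prediction error from the chain itself, since $a \notin \{b_j\}$. Here the plan is to exploit the \textsc{Corrector}: since $\omega(a) = \nu(a) - \eta^-(a) < \nu(a) - Nk/20$, for any triple $i$ with $\omega(a) < \nu(c_i)$ the page $\sigma(a)$ belongs to $W$, and if \textsc{Corrector} takes its main branch it evicts $r_i$ with $\omega(r_i) \leq \omega(a)$, giving $\eta(r_i) \geq \nu(r_i) - \omega(a) = \eta^-(a) - (\nu(a) - \nu(r_i))$. For right-group long triples sufficiently close to $a$ (so that $\nu(a) - \nu(r_i) \leq \eta^-(a)/2$), this yields $\eta(r_i) > Nk/40$, and summing over enough such triples recovers the $N^2 k/20$ bound. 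The \textsc{Corrector}-fallback sub-case (empty $W$) reduces to BlindOracle on $r_i$ and is handled by another application of Lemma~\ref{Lemma_LeftRightError}. The main obstacle is precisely this second case: bookkeeping which right-group triples simultaneously satisfy the $W$-membership condition and the closeness condition $\nu(a) - \nu(r_i) \leq \eta^-(a)/2$, managing both \textsc{Corrector} branches without double-counting against the left-group analysis, and using the symmetric fact that \emph{both} groups contain at least $N$ long triples --- which is exactly why the lemma defines $\psi$ as the \emph{maximum} (rather than the sum) of short triples over the two groups and requires the chain to be divisible.
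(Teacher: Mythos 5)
There is a genuine gap, and it sits exactly where you flag it, but the problem is worse than a bookkeeping issue: your case split is the wrong one. Splitting on $\eta^-(a)\le Nk/20$ versus $\eta^-(a)> Nk/20$ does not control where $\omega(a)$ falls relative to the chain. Since $\nu(a)$ only needs to exceed $\nu(b_d)$ and can lie arbitrarily far beyond the end of the chain, you can have $\eta^-(a)$ huge while $\omega(a)=\nu(a)-\eta^-(a)$ still exceeds every eviction time in the chain; then $\sigma(a)$ never belongs to $W$ during the chain, the \textsc{Corrector} route yields nothing, and your left-group inequality $N\eta^-(a)+\sum\eta^+(l_i)>N^2k/10$ is useless because the $N\eta^-(a)$ term swallows the bound. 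Moreover, even in the favorable sub-case where $\omega(a)$ does precede the right group, your plan only extracts a uniform $\eta(r_i)>Nk/40$ from triples satisfying the closeness condition $\nu(a)-\nu(r_i)\le\eta^-(a)/2$; there may be few or even zero such triples (e.g.\ when $\nu(a)-\nu(b_d)>\eta^-(a)/2$), and in any case $m$ triples give only $mNk/40$, which does not reach $N^2k/20$ with the at most $N$ long right-group triples available. The loss comes from routing everything through Lemma~\ref{Lemma_LeftRightError} and $\nu(a)$: the quantity you actually control is $\omega(a)$, not $\nu(a)$.

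The paper's proof repairs both defects at once by splitting on the position of $\omega(a)$ relative to the two groups (either $\omega(a)$ lies at or to the right of the left group, or strictly to the left of the right group — these cover all cases). In the first case, \textsc{BlindOracle} evicting $l_i$ while $\sigma(a)$ was available gives $\omega(l_i)\ge\omega(a)$ \emph{directly}, so $\eta(l_i)\ge\omega(a)-\nu(l_i)\ge\sum_{j\ge i}L_j$ over the remaining left-group central links; in the second case $\sigma(a)\in W$ at each right-group \textsc{Corrector} eviction, so $\omega(r_i)\le\omega(a)$ and $\eta(r_i)\ge\nu(r_i)-\omega(a)\ge$ the sum of the central-link lengths of the right-group triples preceding it. In both cases the telescoping produces the quadratic accumulation $\sum_{i=1}^{N} i\cdot k/10\ge N^2k/20$, with no appeal to $\eta^-(a)$, no closeness condition, and no need for Lemma~\ref{Lemma_LeftRightError}. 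Your first case survives as a valid (if slightly lossier) variant of the paper's first case, but to close the argument you must replace your dichotomy by the positional one and replace the per-triple uniform bound in the \textsc{Corrector} case by the telescoping sum.
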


\begin{proof}
    We consider two cases based on the position of $\omega(a)$: either $\omega(a)$ lies (not strictly) to the right of the left group of triples, or $\omega(a)$ lies strictly to the left of the right group. We begin with the first case. Let $h := d/3$ denote the total number of triples in the chain.

    Consider the first triple $(l_1, c_1, r_1)$. Algorithm $\mathcal{B}$ evicted $\sigma(l_1)$ instead of $\sigma(a)$ using the \textsc{BlindOracle} strategy, which implies $\omega(l_1) \geq \omega(a)$. Additionally, there are $h/2$ central links in the left group, corresponding to the eviction of random pages, located within the interval $(\nu(l_1), \omega(a))$. Therefore, $\eta(l_1) \geq \sum_{i=1}^{h/2} L_i$, where $L_i$ is the length of the central link in the $i$-th triple (see Fig.~\ref{fig:ErrorExample}).

    \begin{figure}[h]
        \centering
        \includegraphics[width=1\textwidth]{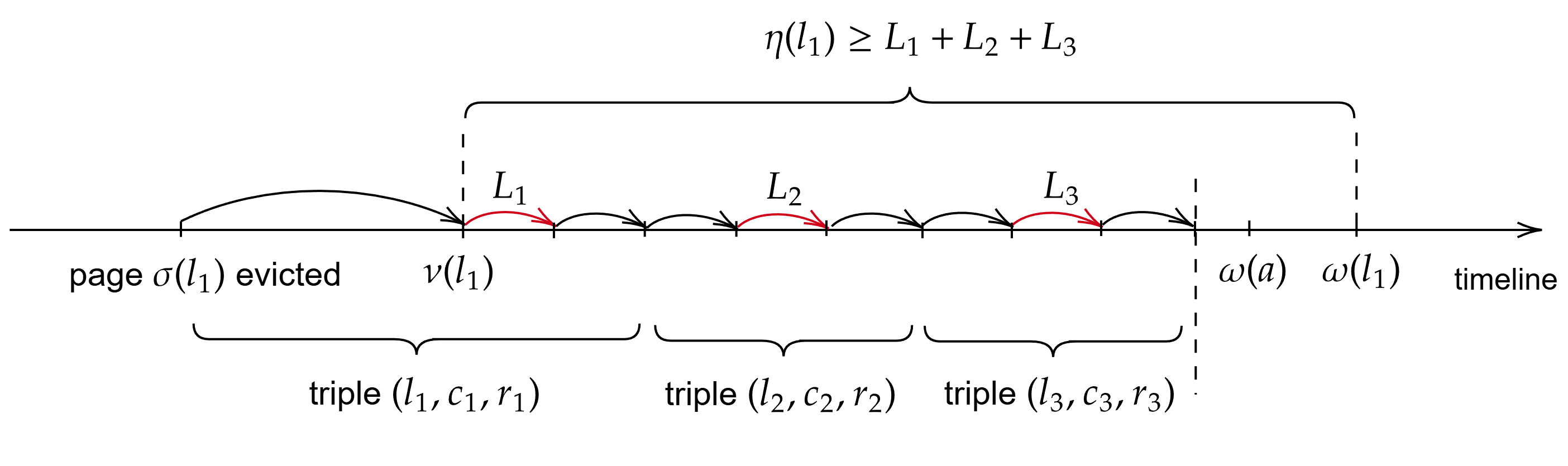}
        \caption{Illustration of how the loss $\eta(l_1)$ is estimated.}
        \label{fig:ErrorExample}
    \end{figure}

    Similarly, $\eta(l_2) \geq \sum_{i=2}^{h/2} L_i$, and so on. Summing these inequalities, we obtain:
    $$
    \sum_{i=1}^d \eta(b_i) \geq \sum_{i=1}^{h/2} \eta(l_i) \geq \sum_{i=1}^{h/2} i L_i \geq \sum_{i=1}^{h/2 - \psi} i \cdot \frac{k}{10} \geq \frac{(d/6 - \psi)^2 k}{20}.
    $$

    Now, assume the second case holds, i.e., $\omega(a)$ lies strictly to the left of the right group. Consider the eviction of $\sigma(r_1')$, where $r_1' := r_h$. The \textsc{Corrector} chose to evict $\sigma(r_1')$ instead of $\sigma(a)$, even though the incorrectness of the prediction $\omega(a)$ was already evident. Thus, $\omega(r_1') \leq \omega(a)$. Moreover, there are $h/2$ central links in the right group within the interval $(\omega(a), \nu(r_1'))$. Therefore, $\eta(r_1') \geq \sum_{i=1}^{h/2} L_i'$, where $L_i' := L_{h+1-i}$.

    By extending this argument, we obtain:
    $$
    \sum_{i=1}^d \eta(b_i) \geq \sum_{i=1}^{h/2} \eta(r_i') \geq \sum_{i=1}^{h/2} i L_i' \geq \sum_{i=1}^{h/2 - \psi} i \cdot \frac{k}{10} \geq \frac{(d/6 - \psi)^2 k}{20}.
    $$
\end{proof}

The following auxiliary result will be required for our analysis:

\begin{lemma} \label{Lemma_RandAlg}
    Let $\xi = (\xi_1, \xi_2, \ldots, \xi_T)$ be a tuple of independent and identically distributed (i.i.d.) random variables following a Bernoulli distribution with parameter $\gamma \in (0, 1)$. Suppose there exists a deterministic algorithm that takes as input the realization of $\xi = (\xi_1, \xi_2, \ldots, \xi_T)$ and outputs a natural number $N(\xi)$. Define $S(\xi, m)$ as the number of ones among the first $m$ values, i.e.,
    $$
    S(\xi, m) := \sum_{i=1}^{m} \xi_i.
    $$
    Then, as $T \to +\infty$ and $\E [N(\xi)] \to +\infty$, the ratio $\E [S(\xi, N(\xi))] \div \E [N(\xi)]$ converges to $\gamma$.
\end{lemma}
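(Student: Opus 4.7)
The plan is to reduce the claim to a martingale maximal inequality applied to the centered partial sums $M_m := S(\xi, m) - \gamma m = \sum_{i=1}^m (\xi_i - \gamma)$ for $0 \leq m \leq T$. Since the $\xi_i$ are i.i.d.\ with mean $\gamma$, the sequence $(M_m)$ is a martingale in its natural filtration with bounded increments $|M_m - M_{m-1}| \leq 1$ and $\E[M_T^2] = T\gamma(1-\gamma)$. The pointwise identity $S(\xi, N(\xi)) - \gamma N(\xi) = M_{N(\xi)}$ reduces the deviation of $\E[S(\xi, N(\xi))]$ from $\gamma \E[N(\xi)]$ to a uniform estimate on $|M|$, which is the key insight I will exploit.

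First I would invoke Doob's $L^2$ maximal inequality,
\[
\E\!\left[\max_{0 \leq m \leq T} M_m^2\right] \leq 4\,\E[M_T^2] = 4T\gamma(1-\gamma),
\]
and then Jensen's inequality to obtain $\E[\max_m |M_m|] \leq 2\sqrt{T\gamma(1-\gamma)}$. Because $N(\xi) \in \{0,\ldots,T\}$ is a deterministic function of $\xi$, the bound $|M_{N(\xi)}| \leq \max_{m} |M_m|$ holds pointwise, and taking expectations yields
\[
\bigl|\E[S(\xi, N(\xi))] - \gamma\,\E[N(\xi)]\bigr| \;\leq\; 2\sqrt{T\gamma(1-\gamma)}.
\]
Dividing by $\E[N(\xi)]$ then gives
\[
\left|\frac{\E[S(\xi, N(\xi))]}{\E[N(\xi)]} - \gamma\right| \;\leq\; \frac{2\sqrt{T\gamma(1-\gamma)}}{\E[N(\xi)]},
\]
from which the convergence follows.

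The principal subtlety I anticipate is that this estimate only yields a vanishing right-hand side when $\E[N(\xi)]$ grows strictly faster than $\sqrt{T}$, a mild strengthening of the stated hypothesis $\E[N(\xi)] \to \infty$. In the intended caching application $N$ arises as the length of a chain or a count of cache misses on a nontrivial instance, and is thus of order $\Theta(T)$, so the required growth rate is automatic. If instead the ``deterministic algorithm'' is meant to be online, so that $N$ is a stopping time in the natural filtration of $\xi$, Wald's identity yields the exact equality $\E[S(\xi, N)] = \gamma \E[N]$ without any asymptotic step; I would clarify which of the two interpretations is intended before finalizing the argument.
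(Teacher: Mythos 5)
Your reduction to the martingale $M_m = S(\xi,m) - \gamma m$ and Doob's $L^2$ maximal inequality is sound as far as it goes, but it proves a strictly weaker statement than the lemma. The bound you obtain, $\bigl|\E[S(\xi,N(\xi))] - \gamma\,\E[N(\xi)]\bigr| \leq 2\sqrt{T\gamma(1-\gamma)}$, is essentially tight for your method (the maximum of $|M_m|$ really is of order $\sqrt{T}$), so after dividing you get convergence only under the additional hypothesis $\E[N(\xi)] / \sqrt{T} \to \infty$. The lemma as stated assumes only $\E[N(\xi)] \to \infty$, with no relation to $T$, and that is exactly how it is used: in Lemma~\ref{Lemma_MainResult} it is applied with $N = \mathrm{OBJ}_\mathcal{B}$, and all that is guaranteed there is that $\E[\mathrm{OBJ}_\mathcal{B}]$ exceeds an absolute constant $c$; the number of cache misses can be far smaller than the number of requests $T$ (think of a sequence that cycles over $k$ pages), so your claim that $N = \Theta(T)$ is ``automatic'' in the intended application is not correct. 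Your Wald's-identity remark does not rescue the argument either, because the lemma deliberately allows $N(\xi)$ to be an arbitrary deterministic function of the \emph{entire} realization (hindsight selection), not a stopping time, and the paper relies on that generality rather than verifying any adaptedness of $\mathrm{OBJ}_\mathcal{B}$ and of the chain decomposition.

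The missing idea is a bound whose error term is independent of $T$. The paper fixes $\alpha \in (\gamma,1)$, conditions on the event $A_\alpha = \{S(\xi,N(\xi)) \geq \alpha N(\xi)\}$, and dominates $N(\xi)$ on that event by $N^*(\xi)$, the \emph{largest} index $n$ with $S(\xi,n) \geq \alpha n$. A Chernoff bound gives $\Prob\{S(\xi,n) \geq \alpha n\} \leq e^{-\beta n}$ with $\beta = \beta(\alpha,\gamma) > 0$, hence $\E[N^*] \leq \sum_n n e^{-\beta n} =: \theta(\alpha,\gamma)$, a constant not depending on $T$. Splitting $\E[S(\xi,N(\xi))]$ over $A_\alpha$ and its complement then yields $\E[S(\xi,N(\xi))] \leq \theta + \alpha\,\E[N(\xi)]$ for every $T$, so the ratio is asymptotically at most $\alpha$, and letting $\alpha \downarrow \gamma$ (plus the symmetric argument applied to the zeros) gives convergence to $\gamma$ under the hypothesis $\E[N(\xi)] \to \infty$ alone. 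To repair your proposal you would need to replace the $O(\sqrt{T})$ maximal-inequality estimate by an argument of this type that controls the hindsight-chosen index with a $T$-independent constant.
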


\begin{proof}
    Let $\alpha \in (\gamma, 1)$. We define two events: $A_\alpha := \{S(\xi, N(\xi)) \geq \alpha \cdot N(\xi)\}$ and its complement $\bar{A}_\alpha$. Additionally, let $N^*(\xi)$ denote the maximal $n^*$ such that $S(\xi, n^*) \geq \alpha n^*$. 

    Observe that 
    \begin{equation} \label{eq:lemma_randalg_1}
        \E[N(\xi) \mid A_\alpha] \mathbb{P}(A_\alpha) \leq \E [N^*(\xi) \mid A_\alpha] \mathbb{P}(A_\alpha) \leq \E N^*(\xi).
    \end{equation}
    Furthermore,
    $$
        \E [N^*(\xi)] \leq \sum_{n=1}^{\infty} n \mathbb{P} \{S(\xi, n) \geq \alpha n\}.
    $$
    Applying the Chernoff bound, we obtain
    $$
        \mathbb{P} \left\{\sum_{i=1}^n \xi_i \geq \alpha n\right\} \leq \exp(-\beta n),
    $$
    where $\beta := \delta^2 \gamma / (2 + \delta)$ and $\delta := \alpha/\gamma - 1$. Substituting this into the previous inequality yields
    \begin{equation} \label{eq:lemma_randalg_2}
        \E [N^*(\xi)] \leq \sum_{n=1}^{\infty} n \exp(-\beta n) = \frac{\exp(\beta)}{(\exp(\beta) - 1)^2}.
    \end{equation}
    Let $\theta := \exp(\beta)/(\exp(\beta) - 1)^2$. Note that $\theta$ is a constant depending only on $\alpha$ and $\gamma$. Combining \eqref{eq:lemma_randalg_1} and \eqref{eq:lemma_randalg_2}, we conclude
    \begin{equation} \label{eq:lemma_randalg_3}
        \E[N(\xi) \mid A_\alpha] \mathbb{P}(A_\alpha) \leq \theta.
    \end{equation}

    Next, we decompose $\E [S(\xi, N(\xi))]$ as follows:
    $$
    \E [S(\xi, N(\xi))] = \E [S(\xi, N(\xi)) \mid A_\alpha] \mathbb{P}(A_\alpha) + \E [S(\xi, N(\xi)) \mid \bar{A}_\alpha] \mathbb{P}(\bar{A}_\alpha).
    $$
    Using the inequalities $\E [S(\xi, N(\xi)) \mid A_\alpha] \leq \E [N(\xi) \mid A_\alpha]$ and $\E [S(\xi, N(\xi)) \mid \bar{A}_\alpha] \leq \alpha \E [N(\xi) \mid \bar{A}_\alpha]$, we obtain
    $$
    \E [S(\xi, N(\xi))] \leq \E [N(\xi) \mid A_\alpha] \mathbb{P}(A_\alpha) + \alpha \E [N(\xi) \mid \bar{A}_\alpha] \mathbb{P}(\bar{A}_\alpha).
    $$
    Combining this with \eqref{eq:lemma_randalg_3}, we derive
    $$
    \E [S(\xi, N(\xi))] \leq \theta + \alpha \E [N(\xi)].
    $$
    As $\E [N(\xi)] \to \infty$, the ratio $\E [S(\xi, N(\xi))] \div \E [N(\xi)]$ is asymptotically bounded above by $\gamma$, since $\alpha$ can be chosen arbitrarily close to $\gamma$.

    To establish the lower bound $\E [S(\xi, N(\xi))] \div \E [N(\xi)] \geq \gamma$, it suffices to show that the ratio $\E [\bar{S}(\xi, N(\xi))] \div \E [N(\xi)]$ is asymptotically bounded above by $1 - \gamma$, where $\bar{S}(\xi, m) := m - S(\xi, m)$ represents the number of zeros among the first $m$ values. This follows directly from the fact that each zero occurs with probability $1 - \gamma$, and the previously proven result can be applied.
\end{proof}

\begin{lemma} \label{Lemma_MainResult}
    Suppose $k \geq 10$ and $\B$ is the \textsc{AlternatingOracle}. Then the following inequality holds:
    $$
    \mathbb{E}[\mathrm{OBJ}_\mathcal{B}] \leq \max \left\{230 \cdot \mathrm{OPT} + 60 \sqrt{\frac{ 60  \cdot \eta \cdot \mathrm{OPT}}{k}},\, c\right\},
    $$
    where $c$ is an absolute constant.
\end{lemma}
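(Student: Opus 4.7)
The plan is to apply the colored eviction graph from Lemma~\ref{Lemma_ColoredErrorGraphExistence} to $\B = \textsc{AlternatingOracle}$ and bound the total number of edges $|E|$ chain by chain. Since Lemma~\ref{Lemma_ColoredErrorGraphExistence} already delivers $\mathrm{OBJ}_\B \le 2\,\mathrm{OPT} + |E|$ together with the cap $n \le 3\,\mathrm{OPT}$ on the number of colors, the whole task reduces to showing $\mathbb{E}[|E|] = O(\mathrm{OPT}) + O\bigl(\sqrt{\mathrm{OPT}\cdot\eta / k}\bigr)$.

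First I would discard chains of length below a small absolute constant (contributing $O(\mathrm{OPT})$ edges in total, since $n \le 3\,\mathrm{OPT}$) and truncate every surviving chain so that its length $d_j$ becomes divisible by~$6$, losing only a bounded number of edges per chain and hence $O(\mathrm{OPT})$ edges overall. Each remaining chain is then divisible in the sense of Lemma~\ref{Lemma_LongChain}, which yields the pointwise bound
\[
d_j \;\le\; 6\psi_j + C_1\sqrt{\eta_j / k},
\]
where $\psi_j$ is the larger count of short triples among the two halves of chain~$j$ and $\eta_j$ is the oracle loss summed over the chain's vertices. Summing over $j$, applying Cauchy--Schwarz with $\sum_j \eta_j \le \eta$, and using $n \le 3\,\mathrm{OPT}$ yields the pointwise bound
\[
|E| \;\le\; O(\mathrm{OPT}) + 6\sum_j \psi_j + C_2\sqrt{\mathrm{OPT}\cdot\eta / k}.
\]

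The decisive step is controlling $\mathbb{E}\bigl[\sum_j \psi_j\bigr]$. Every central link of a triple is produced by \textsc{RandomAlg}; at the moment of such an eviction, at most $k/10$ of the $k$ cached pages have their next request within the subsequent $k/10$ time steps, so conditional on the history the central eviction is ``short'' with probability at most $1/10$. I would couple the indicator sequence of short central evictions with an i.i.d.\ $\mathrm{Bernoulli}(1/10)$ sequence $\xi_1, \xi_2, \ldots$ that stochastically dominates it, letting $N(\xi)$ denote the total number of triples produced. Lemma~\ref{Lemma_RandAlg} then gives
\[
\mathbb{E}\Bigl[\sum_j \psi_j\Bigr] \;\le\; \mathbb{E}[\text{total short triples}] \;\le\; \tfrac{1}{10}\mathbb{E}[N(\xi)] + o\bigl(\mathbb{E}[N(\xi)]\bigr) \;\le\; \tfrac{1}{30}\mathbb{E}[|E|] + o\bigl(\mathbb{E}[|E|]\bigr).
\]
Substituting this into the previous display and taking expectations produces a linear inequality of the form $\mathbb{E}[|E|] \le O(\mathrm{OPT}) + \tfrac{1}{5}\mathbb{E}[|E|] + C_2\sqrt{\mathrm{OPT}\cdot\eta/k}$, whose solution is $\mathbb{E}[|E|] \le O(\mathrm{OPT}) + O\bigl(\sqrt{\mathrm{OPT}\cdot\eta/k}\bigr)$. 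Combined with $\mathrm{OBJ}_\B \le 2\,\mathrm{OPT} + |E|$, careful bookkeeping of the constants reproduces the stated inequality.

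The principal obstacle is precisely this probabilistic step: the chain structure, the total number of triples, and the random bits of \textsc{RandomAlg} are all entwined, so Wald's identity cannot be invoked directly---Lemma~\ref{Lemma_RandAlg} is designed exactly for this stopping-time-dependent setting. Its asymptotic nature is also the reason for the $\max\{\,\cdot\,,c\}$ in the statement: when $\mathbb{E}[|E|]$ is not yet large enough for the limit ratio $\gamma = 1/10$ to apply, the quantitative conclusion is weaker, but in that regime $\mathbb{E}[\mathrm{OBJ}_\B] \le 2\,\mathrm{OPT} + \mathbb{E}[|E|]$ is itself bounded by an absolute constant, which is exactly the role of $c$.
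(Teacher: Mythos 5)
Your skeleton coincides with the paper's own proof: start from Lemma~\ref{Lemma_ColoredErrorGraphExistence} ($\mathrm{OBJ}_\B \le 2\,\mathrm{OPT}+|E|$, at most $3\,\mathrm{OPT}$ colors), trim each chain by $O(1)$ edges to make it divisible (the paper loses at most $21\,\mathrm{OPT}$ edges this way), invoke Lemma~\ref{Lemma_LongChain} color by color, sum via Cauchy--Schwarz with $n\le 3\,\mathrm{OPT}$ to get $|E'| \le 6\sqrt{60\,\eta\,\mathrm{OPT}/k} + 6\sum_i\psi_i + O(\mathrm{OPT})$, and then control $\E\bigl[\sum_i\psi_i\bigr]$ through the $1/10$ conditional bound on shortness of a \textsc{RandomAlg} eviction together with Lemma~\ref{Lemma_RandAlg}. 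Your $1/10$ bound itself is fine: ``at most $k/10$ cached pages can be re-requested within $k/10$ steps'' is equivalent to the paper's rank argument $L_j \ge q_j$ with $q_j$ uniform on $\{1,\dots,k\}$.

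The gap is in the normalization of the probabilistic step. You apply Lemma~\ref{Lemma_RandAlg} to the indicator sequence of short \emph{central links of triples}, with $N(\xi)$ the number of triples, concluding $\E[\text{short triples}] \le \tfrac1{10}\E[N] + o(\E[N]) \le \tfrac1{30}\E[|E|]+o(\E[|E|])$. But whether a given \textsc{RandomAlg} eviction ends up being the central link of a triple of a monochromatic chain is decided only by the whole subsequent trajectory and by the analysis-time construction and truncation of $H_\B$; it is not adapted to the history at the moment of that eviction. Hence the claimed stochastic domination of this \emph{selected subsequence} by an i.i.d.\ Bernoulli$(1/10)$ sequence is not justified: conditioning on ``this eviction became a central link'' can in principle bias it toward being short, while Lemma~\ref{Lemma_RandAlg} needs the count of ones among a prefix of the full dominated sequence. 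The paper sidesteps this by bounding $\sum_i \psi_i$ by $\widetilde\psi$, the number of short draws among \emph{all} evictions (formally, a uniform rank is read at every eviction, used or not), and applying Lemma~\ref{Lemma_RandAlg} with $N=\mathrm{OBJ}_\B$; this gives $6\,\E\bigl[\sum_i\psi_i\bigr]\le 0.9\,\E[\mathrm{OBJ}_\B]$, so the self-bounding inequality is in terms of $\E[\mathrm{OBJ}_\B]$ rather than $\E[|E|]$, which is exactly where the factor $10$ blowup ($23\to 230$, $6\to 60$) comes from. Your argument is repaired by exactly this substitution, and with it your constants only improve, so the stated bound still follows. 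A minor additional slip: in the small regime you assert that $\E[\mathrm{OBJ}_\B]\le 2\,\mathrm{OPT}+\E[|E|]$ is ``bounded by an absolute constant'' --- it is not, since $\mathrm{OPT}$ is not a constant; the correct reading is that when the relevant expectation is below the threshold needed for Lemma~\ref{Lemma_RandAlg}, either the whole quantity is at most $c$ or the $\mathrm{OPT}$ term already dominates, which is what the $\max\{\cdot\,,c\}$ in the statement encodes.
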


\begin{proof}
    Let $H_\mathcal{B}$ denote the eviction graph from Lemma~\ref{Lemma_ColoredErrorGraphExistence}. As previously established, for any realization of the random bits, the following holds:
    \begin{equation} \label{Lemma_MainResult_ineq_1}
        \mathrm{OBJ}_\mathcal{B} \leq 2 \cdot \mathrm{OPT} + |E|.
    \end{equation}

    First, we ensure that all chains corresponding to colors are divisible. To achieve this, we remove up to four edges at the ends of each chain to allow partitioning into triples, and, if necessary, delete up to three additional edges to ensure the number of triples is even (or to eliminate the color entirely, since zero is also even). In total, we remove at most seven edges per color, which reduces $|E|$ by at most $21 \cdot \mathrm{OPT}$, since there are at most $3 \cdot \mathrm{OPT}$ colors. Thus, from~\eqref{Lemma_MainResult_ineq_1}, we obtain:
    \begin{equation} \label{Lemma_MainResult_ineq_2}
        \mathrm{OBJ}_\mathcal{B} \leq 23 \cdot \mathrm{OPT} + |E'|,
    \end{equation}
    where $E'$ is the set of edges after making all chains divisible.

    Let $n$ denote the number of remaining colors, and let $d_i$ be the number of edges of the $i$-th color. Denote these edges as $(a, b_1), \ldots, (a, b_{d_i})$, and define
    $$
    s(i) := \sum_{x=1}^{d_i} \eta(b_{x}).
    $$
    Summing $s(i)$ over all colors, each term appears at most once, since $H_\mathcal{Q}$ is an oriented forest (see Lemma~\ref{Lemma_ErrorGraphForest}). Therefore, $\eta \geq \sum_{i=1}^n s(i)$. By Lemma~\ref{Lemma_LongChain}, for any color $i$,
    $$
    s(i) \geq \frac{(d_i/6 - \psi_i)^2 k}{20},
    $$
    where $\psi_i$ is the maximum number of short triples in either the left or right group of the chain.

    Consequently,
    $$
    \eta \geq \sum_{i=1}^n \frac{(d_i/6 - \psi_i)^2 k}{20}.
    $$
    Solving the corresponding optimization problem yields the upper bound
    $$
    \sum_{i=1}^n (d_i/6 - \psi_i) \leq \sqrt{\frac{20\eta n}{k}}.
    $$
    Since $n \leq 3 \cdot \mathrm{OPT}$, it follows that
    $$
    |E'| = \sum_{i=1}^n d_i \leq 6 \sqrt{\frac{60\eta \cdot \mathrm{OPT}}{k}} + 6 \sum_{i=1}^n \psi_i.
    $$
    Combining this result with~\eqref{Lemma_MainResult_ineq_2}, we obtain
    \begin{equation} \label{eq:lemma_final_result_1}
        \mathrm{OBJ}_\mathcal{B} \leq 23 \cdot \mathrm{OPT} + 6 \sqrt{\frac{60\eta \cdot \mathrm{OPT}}{k}} + 6 \sum_{i=1}^n \psi_i.
    \end{equation}
    This inequality holds for any realization of the random bits in algorithm $\mathcal{B}$. However, for certain realizations, the term $6 \sum_{i=1}^n \psi_i$ may dominate $\mathrm{OBJ}_\mathcal{B}$ (for example, if all chains are short), making~\eqref{eq:lemma_final_result_1} trivial. This is where the role of randomness becomes essential. We show that for sufficiently large $\mathbb{E}[\mathrm{OBJ}_\mathcal{B}]$, the following holds:
    $$
    \mathbb{E}\left[6 \sum_{i=1}^n \psi_i\right] \leq 0.9 \cdot \mathbb{E}[\mathrm{OBJ}_\mathcal{B}].
    $$

    To prove this, consider an arbitrary triple. The value $L_j$ equals $\nu(c_j) - t_j$, where $t_j$ is the time step when $\sigma(c_j)$ was evicted.

    Consider the cache content immediately before the eviction of $\sigma(c_j)$. Sort all stored pages by the remaining time until their next request. Let $q_j$ denote the position of the randomly chosen page in this sorted list (for example, $q_j = k$ if the algorithm accidentally selects the optimal page). We claim that $L_j \geq q_j$.

    Indeed, there are $q_j - 1$ pages in the cache that will be requested before the evicted page. Even if only one request occurs for each of these pages and no other requests are made, there will still be $q_j - 1$ requests before $\nu(c_j)$.

    It follows that $\sum_{i=1}^n \psi_i \leq \widetilde{\psi}$, where $\widetilde{\psi}$ counts the number of $q_i$ such that $q_i \leq k/10$. Since $q_i \sim U\{1, \ldots, k\}$ and $k \geq 10$, the probability $\gamma$ that $q_i \leq k/10$ is at most $0.1$.

    During its execution, algorithm $\mathcal{B}$ reads $N$ realizations of the random variables $q_i$, where $N$ depends on these realizations. For simplicity, assume $N = \mathrm{OBJ}_\mathcal{B}$, meaning the algorithm reads $q_i$ even when using \textsc{BlindOracle} or \textsc{Corrector} but does not use it. Since no other randomness is involved, $\mathcal{B}$ can be viewed as a deterministic algorithm with these random inputs. By applying Lemma~\ref{Lemma_RandAlg}, if $\mathbb{E}[\mathrm{OBJ}_\mathcal{B}] \geq c$ for some sufficiently large constant $c$, then $\mathbb{E}[\widetilde{\psi}] \leq 0.15 \cdot \mathbb{E}[\mathrm{OBJ}_\mathcal{B}]$. Therefore,
    $$
    6 \cdot \mathbb{E}\left[\sum_{i=1}^n \psi_i\right] \leq 6 \cdot \mathbb{E} [\widetilde{\psi}] \leq 0.9 \cdot \mathbb{E}[\mathrm{OBJ}_\mathcal{B}],
    $$
    as required. Substituting this into~\eqref{eq:lemma_final_result_1}, we obtain
    $$
    \mathbb{E}[\mathrm{OBJ}_\mathcal{B}] \leq \max \left\{23 \cdot \mathrm{OPT} + 6 \sqrt{\frac{60\eta \cdot \mathrm{OPT}}{k}} + 0.9 \cdot \mathbb{E}[\mathrm{OBJ}_\mathcal{B}],\, c\right\}.
    $$
    Rearranging terms completes the proof.
\end{proof}

It is important to note that although Lemma~\ref{Lemma_MainResult} was proven under the assumption that $k \geq 10$, this restriction does not affect the asymptotic analysis. Indeed, for constant values of $k$ (specifically, when $k < 10$), even the LRU algorithm achieves a constant competitive ratio.

Nevertheless, one might be concerned about the large constants appearing in our upper bound for the competitive ratio. For practical values of $\eta$ and $k$, these constants may seem unacceptably high. To address this concern, we show that, in general, the \textsc{AlternatingOracle} algorithm performs comparably to the \textsc{BlindOracle} algorithm.

\begin{lemma} \label{Lemma_NotMuchWorse}
    Let $\mathcal{B}$ denote the \textsc{AlternatingOracle} algorithm. Then the following inequality holds:
    $$
    \mathrm{OBJ}_\mathcal{B} \leq 3 \cdot \mathrm{OPT} + 3\eta.
    $$
\end{lemma}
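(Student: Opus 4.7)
The plan is to bound the number of edges in an eviction graph of \textsc{AlternatingOracle} and invoke Lemma~\ref{Lemma_ErrorGraphExistence} (which gives $\mathrm{OBJ}_\mathcal{B} \leq \mathrm{OPT} + |E|$), reducing the task to $|E| \leq 2\mathrm{OPT} + 3\eta$. I would partition $E = E_{BO} \sqcup E_{RA} \sqcup E_C$ according to which of \textsc{AlternatingOracle}'s three sub-strategies actually evicted the target of each edge, folding \textsc{Corrector}'s BO-fallback branch into $E_{BO}$ since it inherits the same $\omega$-optimality property.

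The edges in $E_{BO}$ satisfy the same monotonicity certificate $\omega(j) \geq \omega(i)$ as in Corollary~\ref{Cor_1}; Lemma~\ref{Lemma_LeftRightError} applies verbatim and the telescoping of outgoing edges from each source yields $|E_{BO}| \leq \eta$. For edges in $E_C$, the key observation is that whenever \textsc{Corrector} evicts $\sigma(j)$ from $W$ we have $\omega(j) \leq t$, while the cache invariant forces $\nu(j) \geq t+2$; hence $\eta^-(j) \geq 2$ for every such edge. Because $H_\mathcal{B}$ is a forest (Lemma~\ref{Lemma_ErrorGraphForest}) the targets $j$ of edges in $E_C$ are all distinct, and a shared-budget accounting with the $\eta^-$-contributions already used on the source side of the $E_{BO}$ bound yields $|E_{BO}| + 2|E_C| \leq 2\eta$.

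The main obstacle is $E_{RA}$, since a random eviction carries no prediction-loss certificate and in the worst-case realization of the random bits every \textsc{RandomAlg}-eviction can be a mistake. The plan here is to switch to the colored eviction graph of Lemma~\ref{Lemma_ColoredErrorGraphExistence}, in which same-color edges form a triggered chain along which \textsc{AlternatingOracle}'s three strategies strictly cycle \textsc{BlindOracle}$\to$\textsc{RandomAlg}$\to$\textsc{Corrector}. Within each color-chain of length $d_c$ the \textsc{RandomAlg}-targets occupy at most $\lceil d_c/3\rceil$ positions, so summing over the $\leq 3\mathrm{OPT}$ colors yields a structural bound of the form $|E_{RA}| \leq \tfrac{1}{3}|E| + O(\mathrm{OPT})$. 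Substituting into the partition and combining with the $\eta$-bound on $|E_{BO}| + |E_C|$ gives $|E| = O(\mathrm{OPT}) + O(\eta)$. The final technical challenge, which I expect to be the hardest step, is tightening this chain of estimates so that the coefficients collapse to exactly $3$ in front of both $\mathrm{OPT}$ and $\eta$—most likely by doing the $\eta$-accounting per source vertex of $H_\mathcal{B}$ (as in the proof of Corollary~\ref{Cor_1}) rather than only per color, so that the savings inside each triple of consecutive links are not lost to crude chain-level bookkeeping.
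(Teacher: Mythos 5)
Your decomposition of the edge set by evicting sub-strategy, with an $\eta$-certificate for \textsc{BlindOracle}-type edges and for \textsc{Corrector}-from-$W$ edges, is sound in outline (for a Corrector target one gets $\eta^-(j)\geq 2$ because $\omega(j)\leq t-1$ while $\nu(j)\geq t+1$; your claim $\nu(j)\geq t+2$ is not quite right but harmless), yet the proof as proposed does not establish the stated inequality, and the gap sits exactly where you place ``the hardest step.'' First, there is a graph-mixing inconsistency: you reduce the claim to $|E|\leq 2\,\mathrm{OPT}+3\eta$ via Lemma~\ref{Lemma_ErrorGraphExistence}, but that graph carries no chain structure, so your bound on $|E_{RA}|$ forces you onto the colored graph of Lemma~\ref{Lemma_ColoredErrorGraphExistence}, for which the available inequality is only $\mathrm{OBJ}_\B\leq 2\,\mathrm{OPT}+|E|$; with that baseline you would need $|E|\leq\mathrm{OPT}+3\eta$, which your estimates do not reach. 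Tracking your own constants: $|E_{RA}|\leq |E|/3+2\,\mathrm{OPT}$ (ceiling loss over at most $3\,\mathrm{OPT}$ colors) and $|E_{BO}|+2|E_C|\leq 2\eta$ (asserted rather than proved, though a careful count---each $\eta^-$ used at most twice, each $\eta^+$ at most once---does give it) yield $|E|\leq 3\eta+3\,\mathrm{OPT}$ and hence only $\mathrm{OBJ}_\B\leq 5\,\mathrm{OPT}+3\eta$. So the claimed coefficient $3$ in front of $\mathrm{OPT}$ is not obtained, and the per-source refinement you gesture at for closing the gap is not worked out.

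The paper's proof avoids all of this and never needs any certificate for \textsc{RandomAlg} or \textsc{Corrector} evictions, nor the colored graph. It partitions the \emph{evictions} (not the edges) into trigger-triples headed by a \textsc{BlindOracle} eviction, each followed by the \textsc{RandomAlg} and \textsc{Corrector} evictions it triggers. From Lemma~\ref{Lemma_ErrorGraphExistence} one gets $\mathrm{OBJ}^+_\B\leq\mathrm{OPT}$, so at most $\mathrm{OPT}$ triples have an optimal head and contribute at most $2\,\mathrm{OPT}$ mistakes; the remaining triples are headed by \textsc{BlindOracle} mistakes, of which there are at most $\eta$ by the counting in Corollary~\ref{Cor_1}, and they contribute at most $3\eta$ evictions. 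Hence $\mathrm{OBJ}_\B\leq\mathrm{OPT}+\mathrm{OBJ}^-_\B\leq 3\,\mathrm{OPT}+3\eta$ with the exact constants. To salvage your route you would either have to redo the accounting per trigger-triple on the uncolored graph (at which point it essentially collapses into the paper's argument) or settle for a larger additive $\mathrm{OPT}$ term.
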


\begin{proof}
    As a direct consequence of Lemma~\ref{Lemma_ErrorGraphExistence}, for any algorithm $\mathcal{Q}$, we have:
    \begin{equation} \label{eq:lemma_notmuchworse_1}
        \mathrm{OBJ}_{\mathcal{Q}} \leq \mathrm{OPT} + \mathrm{OBJ}^{-}_{\mathcal{Q}},
    \end{equation}
    where $\mathrm{OBJ}^{-}_{\mathcal{Q}}$ denotes the number of mistakes made by $\mathcal{Q}$, and $\mathrm{OBJ}^{+}_{\mathcal{Q}}$ denotes the number of times $\mathcal{Q}$ evicts the optimal page. Thus, $\mathrm{OBJ}_{\mathcal{Q}} = \mathrm{OBJ}^{+}_{\mathcal{Q}} + \mathrm{OBJ}^{-}_{\mathcal{Q}}$. Substituting this into~\eqref{eq:lemma_notmuchworse_1}, we obtain:
    \begin{equation} \label{eq:lemma_notmuchworse_2}
        \mathrm{OBJ}^{+}_{\mathcal{Q}} \leq \mathrm{OPT}.
    \end{equation}
    
    Let $H_{\mathcal{B}}$ be the eviction graph as defined in Lemma~\ref{Lemma_ErrorGraphExistence}. We have:
    \begin{equation} \label{eq:lemma_notmuchworse_3}
        \mathrm{OBJ}_{\mathcal{B}} \leq \mathrm{OPT} + |E|,
    \end{equation}
    where $E$ is the set of edges in $H_{\mathcal{B}}$. Although Lemma~\ref{Lemma_ErrorGraphExistence} does not require an edge to be added for every mistake, we can always augment $H_{\mathcal{B}}$ with additional edges without violating~\eqref{eq:lemma_notmuchworse_3}. Therefore, we may assume that $E$ contains an edge for every mistake.

    Let $B \subseteq E$ be the subset of edges corresponding to evictions made by the \textsc{BlindOracle} strategy. From the proof of Corollary~\ref{Cor_1}, we have $|B| \leq \eta$. Furthermore, we claim that $|E| \leq 3|B| + 2 \cdot \mathrm{OPT}$, which will suffice to complete the proof.

    We partition the evictions into triples, where each triple consists of:
    \begin{enumerate}
        \item An eviction made by \textsc{BlindOracle},
        \item An eviction made by \textsc{RandomAlg} (triggered by the previous eviction),
        \item An eviction made by \textsc{Corrector} (triggered by the previous eviction).
    \end{enumerate}
    Note that some triples may contain fewer than three evictions if certain evictions do not trigger subsequent ones. However, this is not an issue, as we only require that each triple includes at least one eviction made by \textsc{BlindOracle}.

    Consider the triples in which \textsc{BlindOracle} evicts the optimal page. By~\eqref{eq:lemma_notmuchworse_2}, the number of such triples is at most $\mathrm{OPT}$. Therefore, the number of mistakes in these triples is at most $2 \cdot \mathrm{OPT}$ (assuming both \textsc{RandomAlg} and \textsc{Corrector} make mistakes in every case).

    Now, consider the remaining triples, i.e., those in which \textsc{BlindOracle} makes a mistake. The number of such triples is $|B|$, so the total number of evictions (and thus mistakes) in these triples is at most $3|B|$.
\end{proof}

\section{Conclusion} \label{Sect:Conclusion}

In this work, we established two main results for learning-augmented online caching with next-occurrence predictions. First, we improved the competitive ratio upper bound for the \textsc{BlindOracle} algorithm. Second, we introduced a new algorithm, \textsc{AlternatingOracle}, and proved that it achieves an asymptotically superior competitive ratio compared to all previously known approaches. These results advance our understanding of how to leverage predictions in online caching systems.

Several open questions remain:
\begin{itemize}
\item Are our bounds tight, or can they be further improved?
\item Do there exist randomized algorithms with strictly better asymptotic performance than \textsc{AlternatingOracle}?
\end{itemize}

\bibliographystyle{plainnat}
\bibliography{refs} 

\appendix

\end{document}